\documentclass[sigconf, nonacm]{acmart}

\usepackage{algorithm}
\usepackage[noend]{algpseudocode}
\usepackage{amsmath}
\usepackage{enumitem}
\usepackage{booktabs}
\usepackage{subcaption}
\usepackage{graphicx}
\usepackage[capitalize,nameinlink]{cleveref}

\newcommand\vldbdoi{XX.XX/XXX.XX}
\newcommand\vldbpages{XXX-XXX}
\newcommand\vldbvolume{14}
\newcommand\vldbissue{1}
\newcommand\vldbyear{2020}
\newcommand\vldbauthors{\authors}
\newcommand\vldbtitle{\shorttitle} 
\newcommand\vldbavailabilityurl{https://github.com/WildAlg/threshold-algorithm}
\newcommand\vldbpagestyle{plain}

\newcommand{\revfix}[1]{{#1}} 
\begin{document}
\title{Scalable Triangle Counting: The Threshold Algorithm}

\newcommand{\E}{\mathbb{E}}
\newcommand{\Prb}{\mathbb{P}}
\newcommand{\Var}{\mathrm{Var}}

\author{Asaf Etgar}
\affiliation{%
  \institution{Yale University}
  \city{New Haven}
  \country{USA}
}
\email{asaf.etgar@yale.edu}

\author{Anna Gilbert}
\orcid{0000-0002-1825-0097}
\affiliation{%
  \institution{Yale University}
  \city{New Haven}
  \country{USA}
}
\email{anna.gilbert@yale.edu}

\author{Quanquan C. Liu}
\orcid{???}
\affiliation{%
  \institution{Yale University}
  \city{New Haven}
  \country{USA}
}
\email{quanquan.liu@yale.edu}

\author{Andrew McGregor}
\orcid{???}
\affiliation{%
  \institution{University of Massachusetts}
  \city{Amherst}
  \country{USA}
}
\email{amcgrego@umass.edu}

\begin{abstract}
We study one-pass triangle counting on random-order edge streams. We
present a remarkably simple algorithm---read edges from the stream
until $Q$ triangles are observed in the prefix, then output
$Q\,(m/S)^3$ where $S$ is the stopping length---and prove that, when
the maximum number of triangles incident to any edge satisfies
$\eta \le T^{2/3}$, this is a $(1\pm\varepsilon)$-approximation of
$T$ with probability $1-\delta$ using
$O(\varepsilon^{-2}\log(1/\delta)\, m/T^{1/3})$ memory. Crucially, the
algorithm does not need any a priori estimate of $T$, in sharp
contrast with state-of-the-art sampling-rate based algorithms
\cite{mcgregor2020triangle, TsourakakisKangMillerFaloutsos2009}. It
also does not need a prescribed memory budget: the stopping rule
self-selects the prefix length and can return an estimate before reading the
entire stream.

The proof rests on a Schudy--Sviridenko concentration argument for an
independent-edge-sampling estimator, coupled to the without-replacement
prefix produced by the algorithm. On six real temporal streams, the
algorithm's stopping prefix follows the predicted cube-root scaling and
achieves at most $6\%$ error at a $10\%$ prefix, without using $T$. At a
fixed stored-edge budget, variance-reduced reservoir samplers are often
more accurate, but only after reading the entire stream. On a separate, much larger, 
$1.8\times10^9$-edge graph, the threshold algorithm reads $0.46\%$ of
the stream and returns $3.8\%$ error, while the strongest reservoir
baselines do not finish a pass within the wall-clock cap.
\end{abstract}

\maketitle

\pagestyle{\vldbpagestyle}
\begingroup\small\noindent\raggedright\textbf{PVLDB Reference Format:}\\
\vldbauthors. \vldbtitle. PVLDB, \vldbvolume(\vldbissue): \vldbpages, \vldbyear.\\
\href{https://doi.org/\vldbdoi}{doi:\vldbdoi}
\endgroup
\begingroup
\renewcommand\thefootnote{}\footnote{\noindent
This work is licensed under the Creative Commons BY-NC-ND 4.0 International License. Visit \url{https://creativecommons.org/licenses/by-nc-nd/4.0/} to view a copy of this license. For any use beyond those covered by this license, obtain permission by emailing \href{mailto:info@vldb.org}{info@vldb.org}. Copyright is held by the owner/author(s). Publication rights licensed to the VLDB Endowment. \\
\raggedright Proceedings of the VLDB Endowment, Vol. \vldbvolume, No. \vldbissue\ %
ISSN 2150-8097. \\
\href{https://doi.org/\vldbdoi}{doi:\vldbdoi} \\
}\addtocounter{footnote}{-1}\endgroup

\ifdefempty{\vldbavailabilityurl}{}{
\vspace{.3cm}
\begingroup\small\noindent\raggedright\textbf{PVLDB Artifact Availability:}\\
The source code, data, and/or other artifacts have been made available at \url{\vldbavailabilityurl}.
\endgroup
}

\section{Introduction}\label{sec:intro}

Counting the number of triangles $T$ in a massive graph $G=(V,E)$ is
one of the most-studied primitives in scalable graph data
science~\cite{baryossef2002streamingtriangles,buriol2006datastreamtriangles,braverman2013streaminghard,jha2013birthday,pavan2013vldb,cormode2014secondlook,McGregorVorotnikovaVu2016,destefani2016triest,jayaram2021optimal}. In the single-pass streaming model, an algorithm
sees the edges of $G$ one at a time and must output an estimate of
$T$ using memory much smaller than $|E|=m$. Sampling primitives are the standard design paradigm in this setting.
There is also a long line of work on counting triangles via sublinear time algorithms that, potentially in addition to sampling, may make various queries to the graph \cite{eden2015sublineartriangles,assadi2019subgraphcounting,bera2020tetris,bishnu2025arboricity,tetek2022fmm}.

Worst-case lower bounds
make this problem challenging: even on random-order streams, any
$(1+\varepsilon)$-approximation requires
$\Omega(\varepsilon^{-2} m / \sqrt{T})$ memory when $T \le
\sqrt{m}$~\cite{mcgregor2020triangle}. State-of-the-art random-order
algorithms~\cite{mcgregor2020triangle,
TsourakakisKangMillerFaloutsos2009,
TsourakakisKolountzakisMiller2011} attain this bound up to log
factors, and all utilize a sample-based approach: They set a sampling rate $p$ (or multiple rates) and sample edges as they arrive. Crucially, they are all \emph{$T$-aware} and require an \emph{a priori} constant-factor estimate of
$T$ to set their sampling rate. This requirement is prohibitive on
real-world streams: even two snapshots of the same network can differ
in $T$ by an order of magnitude. 

Another family of triangle counting algorithms are \emph{$T$-free} and do not require an estimate on $T$~\cite{destefani2016triest,LimKang2015mascot,Jung2016furl,PaghTsourakakis2012colorful,ShinWRS2017,ShinThinkD2018,Wu2025great}. These algorithms are provided with a specified memory budget $M$, and utilize reservoir or budget sampling to keep a variance-reducing set, typically alongside a running unbiased triangle count. While these families of algorithms do not assume a priori knowledge of $T$, both families suffer from \emph{scalability} and \emph{adaptability} issues. First, they require a full pass over the entire stream of edges. When graphs grow in orders of magnitude, even a single pass could be the runtime bottleneck for an algorithm, even if all other parameters are set optimally. Secondly, they rely on input parameters to guarantee accuracy; e.g., a sufficiently large memory budget. 


This paper studies a remarkably simple alternative: An algorithm that returns a global estimate of $T$ while only looking at a small prefix of the stream and without an initial estimate of $T$ or a memory budget. In a random order stream, any prefix consists of a uniform sample of edges drawn without replacement. Consequently, rather than forcing a full pass over the stream, our algorithm halts when a prefix is large enough to serve as a signal for the entire graph.  A major benefit of the algorithm is that said prefix is \emph{self-selected} by the algorithm and does not need to be prescribed, while still guaranteeing a tight approximation of $T$. The algorithm is surprisingly simple to implement, and different implementations allow the user to adjust the space-time-accuracy tradeoff according to use case constraints.

More formally, let $G=(V,E)$ be a
simple graph with $m=|E|$ edges and $T$ triangles. For each edge $e$,
let $\tau(e)$ denote the number of triangles containing $e$, and let
\[
\eta := \max_{e\in E} \tau(e).
\]
Throughout, we assume $0<\varepsilon\le 1/2$ and $0<\delta<1/10$.
We consider the stochastic process of drawing random edges from $G$
without replacement until some prescribed number $Q$ of triangles is
observed (assuming $Q\leq T$). Let $S$ be the random variable
corresponding to the number of edges sampled by this process.

\begin{theorem} \label{thm:main}
Assume $\eta \leq T^{2/3}$ and set $Q=c_0 \varepsilon^{-6} \log^3 (4/\delta)$ for a sufficiently large constant $c_0>0$. Then
\[ \widehat T:=Q \left(\frac{m}{S }\right)^3 \]
is a $(1\pm \varepsilon)$ approximation of $T$ with probability at least $1-\delta$. Furthermore, $S=O(\varepsilon^{-2} \log (1/\delta)\, m/T^{1/3})$ with  probability at least $1-\delta$.
\end{theorem}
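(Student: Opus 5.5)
We reduce the stopping-time statement to a fixed-length concentration statement. Let $X_s$ be the number of triangles fully contained in the first $s$ edges of the random-order stream, which is a uniform $s$-subset of $E$ drawn without replacement. $X_s$ is nondecreasing in $s$, so $\{S\le s\}=\{X_s\ge Q\}$. Define
\[ s_\pm := (1\mp\varepsilon/4)\, m\,(Q/T)^{1/3}. \]
We will show that with probability at least $1-\delta$ both $X_{s_-}<Q$ and $X_{s_+}\ge Q$ hold. Then $s_-<S\le s_+$, and $\widehat T=Q(m/S)^3$ lies in $[(1-\varepsilon/4)^3T,(1+\varepsilon/4)^{-3}T]$, which is inside $(1\pm\varepsilon)T$ for $\varepsilon\le 1/2$. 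The second claim also follows from $S\le s_+=O(m Q^{1/3}/T^{1/3})=O(\varepsilon^{-2}\log(4/\delta)\,m/T^{1/3})$, since $Q^{1/3}=c_0^{1/3}\varepsilon^{-2}\log(4/\delta)$. If $s_+>m$, i.e.\ $T<Q$ up to constants, the statement is vacuous or trivial under the assumption $Q\le T$.

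\textbf{Step 1 (Bernoulli model).} Let $p=s/m$ and let $Y_p$ count the triangles surviving when each edge is kept independently with probability $p$. Then $Y_p=\sum_{t}\prod_{e\in t}x_e$ is a degree-3 multilinear polynomial with $\E Y_p=p^3T$. At $p_\pm=s_\pm/m$ we have $\E Y_{p_\pm}=(1\mp\varepsilon/4)^3Q$, so the mean sits a $\Theta(\varepsilon)$ fraction of $Q$ away from the threshold. We apply the Schudy--Sviridenko inequality:
\[ \Prb\bigl(|Y-\E Y|\ge\lambda\bigr)\le e^2\max_{r\in\{1,2,3\}}\exp\bigl(-(\lambda^2/(R\,\mu_0\mu_r))^{1/r}\bigr). \]
Here $\mu_r$ is the maximum expected partial derivative of order $r$. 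Concretely, $\mu_0=p^3T$, $\mu_1\le p^2\eta$, $\mu_2\le p$ (two edges lie in at most one common triangle), and $\mu_3=1$. Taking $\lambda=\Theta(\varepsilon Q)$ and $L=\log(4/\delta)$:
\begin{itemize}
\item The $r=3$ term requires $(\varepsilon^2Q^2/Q)^{1/3}\gtrsim L$, i.e.\ $Q\gtrsim\varepsilon^{-2}L^3$. This holds with room to spare.
\item The $r=2$ term requires $(\varepsilon^2 Q/p)^{1/2}\gtrsim L$. This holds because $p\le1$.
\item The $r=1$ term requires $\varepsilon^2Q^2/(Q\,p^2\eta)\gtrsim L$. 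Using $p^2=(Q/T)^{2/3}$ and $\eta\le T^{2/3}$, we get $p^2\eta\le Q^{2/3}$, so the requirement becomes $\varepsilon^2Q^{1/3}\gtrsim L$, i.e.\ $Q\gtrsim\varepsilon^{-6}L^3$.
\end{itemize}
This $r=1$ term is exactly where the hypothesis $\eta\le T^{2/3}$ and the $\varepsilon^{-6}\log^3$ choice of $Q$ come from.

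\textbf{Step 2 (coupling to the prefix).} This is the main obstacle. We transfer the bounds to $X_s$ via monotonicity in two directions.
\begin{itemize}
\item For the upper tail at $s_-$: take Bernoulli rate $p'=(1+\varepsilon/16)p_-$. With probability $1-e^{-\Omega(\varepsilon^2 s)}\ge 1-\delta/4$, the Bernoulli sample has at least $s_-$ edges. Conditioned on its size $k$, it is a uniform $k$-subset, so it stochastically dominates the $s_-$-prefix under the natural nested coupling. Hence $\Prb(X_{s_-}\ge Q)\le\Prb(Y_{p'}\ge Q)+\delta/4$, and $\E Y_{p'}$ is still $(1-\Theta(\varepsilon))Q$.
\item For the lower tail at $s_+$: symmetrically, use rate $(1-\varepsilon/16)p_+$, which is dominated by the $s_+$-prefix.
\end{itemize}
The requirement $\varepsilon^2 s\gtrsim L$ holds since $s\ge Q^{1/3}$ (because $m\ge T^{1/3}$). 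Each of the four failure events is at most $\delta/4$, which yields the claim.

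I expect the delicate parts to be bookkeeping the constants so that each Schudy--Sviridenko term and each binomial size deviation costs at most $\delta/4$, and checking the edge case $s_+>m$.
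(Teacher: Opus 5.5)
Your proposal is correct and follows essentially the paper's argument. You apply Schudy--Sviridenko to the independent-sampling polynomial with the same bounds $\mu_1\le p^2\eta$, $\mu_2\le p$, $\mu_3\le 1$, where the $r=1$ term forces $Q\gtrsim\varepsilon^{-6}L^3$ under $\eta\le T^{2/3}$. You then transfer this to the prefix by identifying a binomial-length prefix with a Bernoulli sample and using Chernoff plus monotonicity of the triangle count. The only structural difference is that you sandwich $S$ between deterministic lengths $s_\pm$ rather than the paper's random $X_2<S<X_1$. Your bookkeeping is, if anything, tidier than the paper's: explicit $\varepsilon/4,\varepsilon/16$ constants, and the case $s_+>m$, where $S\le m<s_+$ holds automatically since $Q\le T$.

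One slip: $s_-<S\le s_+$ gives $\widehat T\in[(1+\varepsilon/4)^{-3}T,\,(1-\varepsilon/4)^{-3}T)$, not the interval you wrote. This corrected interval still lies in $(1\pm\varepsilon)T$ for $\varepsilon\le 1/2$, because $(1-\varepsilon/4)^{-3}$ is convex and $(8/7)^3<3/2$.
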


The argument extends to arbitrary $\eta$, but then the choice of
$Q$ depends\footnote{One can resolve this limitation by using a learning-augmented predictor to estimate this ratio from initial stream data. This method is studied in the companion paper~\cite{liu2025learning}, which uses prefix bucket profiles and
deep neural networks to estimate triangle and $4$-cycle counts on
graph families in both random and arbitrary orders.} on the unknown ratio $\eta/T$. 

Our algorithm uses $\widetilde{O}(m/T^{1/3})$ space. This is a factor $\approx T^{1/6}$ more than the space used by McGregor and Vorotnikova \cite{mcgregor2020triangle} (henceforth MV20) near both the optimal $m/\sqrt{T}$ and the  lower bound $\Omega(\varepsilon^{-2}m/\sqrt{T})$ of \cite{mcgregor2020triangle} (for $T\le \sqrt{m}$). 
The experiments show that this premium buys a different operating point:
the algorithm self-selects $1$--\revfix{$6.6\%$} prefixes with single-digit error
on real temporal streams, reads only $0.46\%$ of the $1.8\times10^9$-edge
\emph{com-friendster} stream with $3.8\%$ error, and converts the main
work into parallel static triangle counting on the stored prefix. Thus
the tradeoff is qualitative rather than only asymptotic: instead of
spending a full pass and a stale estimate of $T$, the threshold rule
allocates space from the observed stream and returns before the input is
exhausted.

\paragraph{Contributions.}
\begin{itemize}[topsep=2pt,itemsep=0pt,leftmargin=*]
\item \cref{sec:algorithm} presents the streaming algorithm
that realizes \cref{thm:main}
(\cref{alg:threshold}). The algorithm has only a single
tunable parameter, $Q$, and does \emph{not} require any a priori
estimate of $T$ or space budget.
\item \cref{sec:concentration}--\cref{sec:stopping} prove
\cref{thm:main} via Schudy--Sviridenko concentration on the
independent-edge-sampling estimator, coupled to the
without-replacement prefix produced by the algorithm. As a byproduct of our analysis, we improve the bound on the sampling probability $p$ by polylogarithmic factors, thereby improving the analysis for DOULION \cite{TsourakakisKangMillerFaloutsos2009} and the subsequent work on triangle sparsifiers \cite{TsourakakisKolountzakisMiller2011}.
\item \cref{sec:experiments} evaluates the algorithm on six
real temporal streams and two large-scale graphs (by order of magnitude). The threshold $Q$
acts as a smooth self-adapting knob: accuracy increases with $Q$ and our Threshold algorithm 
reads only $0.46\%$--\revfix{$6.6\%$} of the stream,
with single-digit error. Against MV20~\cite{mcgregor2020triangle}
and a broad set of $T$-free baselines, Threshold is within $0.08$
absolute error of the best equal-space method. The
main advantage of our algorithm is the early-stop feature:
forced to stop at the Threshold prefix,
native reservoir samplers undershoot the triangle count
with error $\approx1$ and need
$97$--$98\%$ of the stream to reach $5$--$10\%$ error, while Threshold
returns after reading a very small prefix, 
reading up to $215\times$ fewer edges on
\emph{com-friendster}, and its GBBS~\cite{dhulipala2018theoretically}
implementation scales to a
$25\times$ speedup on $64$ cores.
\end{itemize}



\section{The Threshold Algorithm}\label{sec:algorithm}

We now describe the algorithm that realizes the stochastic process
analyzed in \cref{thm:main}. The algorithm receives a
random-order stream of $m$ edges from a graph $G$ and is
given a single integer parameter $Q$, the \emph{triangle threshold}.
It keeps every arriving edge in memory and incrementally counts
triangles closed by each new edge; once $Q$ triangles have been
observed it stops storing edges and returns its estimate.

\begin{algorithm}[t]
\caption{Threshold algorithm for random-order triangle counting.}
\label{alg:threshold}
\begin{algorithmic}[1]
\Require Stream length $m$; triangle threshold $Q$; random-order stream
of edges of $G$.
\State $\mathcal{S} \gets \emptyset$;\ adjacency list $A \gets \emptyset$;\ $\widehat t \gets 0$.
\For{each arriving edge $e=(u,v)$}
  \State $\widehat t \gets \widehat t + |N_A(u)\cap N_A(v)|$
        \Comment{triangles closed by inserting $e$}
  \State $\mathcal{S} \gets \mathcal{S}\cup\{e\}$;\ update $A$ to include $e$.
  \If{$\widehat t \ge Q$} \textbf{break} \EndIf
\EndFor
\State $S \gets |\mathcal{S}|$.
\State \Return $\widehat T = Q \cdot \left(\dfrac{m}{S}\right)^3$.
\end{algorithmic}
\end{algorithm}

The implementation maintains $\mathcal{S}$ as a hash set and $A$ as
hash-based adjacency lists; each edge insertion costs
$O\big(\deg_A(u)+\deg_A(v)\big)$ time, so the total running time is
dominated by the cost of the triangle enumeration inside the prefix of
length $S$. Memory usage is $O(S)$ machine words for the stored edges
plus $O(S)$ for the adjacency structure.

\subsection{Theoretical guarantees}\label{sec:alg-guarantees}

\cref{alg:threshold} is the algorithm to which
\cref{thm:main} directly refers. We restate the guarantee in
terms of the algorithm's parameters.
\begin{theorem}[Restatement of \cref{thm:main}]
\label{thm:alg-main}
Assume $\eta \le T^{2/3}$ and set $Q = c_0\, \varepsilon^{-6}\log^3(4/\delta)$ for a sufficiently large constant $c_0>0$. Let $S$ be the stopping time of \cref{alg:threshold}. Then $\widehat T = Q\,(m/S)^3$ satisfies
\[
\Pr\bigl[\, (1-\varepsilon)\,T \;\le\; \widehat T \;\le\; (1+\varepsilon)\,T \,\bigr]\;\ge\; 1-\delta,
\]
and $S = O\bigl(\varepsilon^{-2}\log(1/\delta)\cdot m / T^{1/3}\bigr)$
with probability at least $1-\delta$.
\end{theorem}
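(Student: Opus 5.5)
The plan is to reduce the stopping-time statement to a concentration statement for a fixed prefix length, and then invert.

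For a fixed integer $s$, let $X_s$ be the number of triangles fully contained in the first $s$ edges of the random-order stream. Then $S=\min\{s: X_s\ge Q\}$, and $X_s$ is monotone in $s$. It follows that
\[
\{S\le s\}=\{X_s\ge Q\}.
\]
So it suffices to pick two deterministic lengths,
\[
s_-=(1+\varepsilon/4)^{-1/3}\, m\,(Q/T)^{1/3},\qquad s_+=(1-\varepsilon/4)^{-1/3}\, m\,(Q/T)^{1/3}.
\]
We then show that $X_{s_-}<Q$ and $X_{s_+}\ge Q$, each with probability at least $1-\delta/2$. On the intersection of these two events, $s_-<S\le s_+$. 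Consequently $Q(m/S)^3$ lies between $(1-\varepsilon/4)T$ and $(1+\varepsilon/4)T$, up to the rounding of $s_\pm$ to integers, which costs a factor $1\pm O(1/s)$ and is negligible.

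The same event also gives the space bound:
\[
S\le s_+=O\bigl(m Q^{1/3}/T^{1/3}\bigr)=O\bigl(\varepsilon^{-2}\log(4/\delta)\, m/T^{1/3}\bigr).
\]
This is exactly the stated bound, since $Q^{1/3}=c_0^{1/3}\varepsilon^{-2}\log(4/\delta)$.

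\textbf{Concentration of $X_s$.} Here I would follow the abstract's route. First analyze the independent-sampling analogue $Y_p$, where each edge is kept independently with probability $p=s/m$. Then $Y_p=\sum_{t}\prod_{e\in t}\xi_e$ is a degree-3 polynomial in independent Bernoullis with $\E Y_p=p^3T$. The Schudy--Sviridenko inequality gives a tail bound of the form
\[
\Pr\bigl[|Y_p-\mu|>\lambda\bigr]\le e^2\max_r \exp\bigl(-(\lambda^2/(R\mu\mu_r))^{1/r}\bigr),\qquad r=1,2,3,
\]
where $\mu_r$ is the maximum expected value of a partial derivative of order $r$. For triangles these are:
\begin{itemize}
\item $\mu_1\le \max(1,\,p^2\eta)$, coming from fixing one edge;
\item $\mu_2\le 1$, since fixing two edges determines the triangle;
\item $\mu_3=1$.
\end{itemize}
Take $\lambda=(\varepsilon/8)\mu$ and $\mu=p^3T=\Theta(Q)$. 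With $\eta\le T^{2/3}$ we get
\[
p^2\eta\le p^2T^{2/3}=(p^3T)^{2/3}=O(Q^{2/3}).
\]
The three exponents then become:
\begin{itemize}
\item $r=1$: $\varepsilon^2\mu/\mu_1\gtrsim \varepsilon^2Q^{1/3}$;
\item $r=2$: $(\varepsilon^2\mu)^{1/2}$;
\item $r=3$: $(\varepsilon^2\mu)^{1/3}\gtrsim \varepsilon^{2/3}Q^{1/3}$.
\end{itemize}
The binding case is $r=3$. Requiring $\varepsilon^{2/3}Q^{1/3}\ge C\log(4/\delta)$ forces $Q\gtrsim\varepsilon^{-2}\log^3$, while requiring $\varepsilon^2Q^{1/3}\ge C\log(4/\delta)$ forces $Q\gtrsim\varepsilon^{-6}\log^3(4/\delta)$. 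This is why $Q=c_0\varepsilon^{-6}\log^3(4/\delta)$ suffices for $\Pr[|Y_p-\mu|>\varepsilon\mu/8]\le\delta/8$. The side condition $Q\le T$ guarantees $p\le 1$.

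\textbf{Coupling to the without-replacement prefix.} This is the main obstacle. I would couple the two processes by taking $p'$ slightly above $s/m$, say $p'=(1+\varepsilon/20)s/m$. By a Chernoff bound, $Y_{p'}$ contains at least $s$ edges with probability $1-\delta/8$; this needs $s\gtrsim\varepsilon^{-2}\log(1/\delta)$, which holds because $s\ge Q^{1/3}$. Conditioned on its size, the independent sample is a uniform subset. So we may realize the first $s$ stream edges as a uniform $s$-subset of the $p'$-sample. Triangle count is monotone under inclusion, which gives $X_s\le Y_{p'}$ on that event. Symmetrically, with $p''=(1-\varepsilon/20)s/m$ we get $X_s\ge Y_{p''}$.

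Applying the concentration bound at $p'$ and $p''$ then gives $X_{s_-}<Q$ and $X_{s_+}\ge Q$. The constants are chosen so that the factors $(1\pm\varepsilon/20)^3$ together with the $\varepsilon/8$ deviation still fit inside the $\varepsilon/4$ margin built into $s_\pm$. Finally, a union bound over the four failure events, each of probability at most $\delta/8$, gives total failure probability at most $\delta$, completing the proof.
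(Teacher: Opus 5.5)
Your argument is correct and is essentially the paper's. The paper's Binomial-length prefixes $E(X_1),E(X_2)$ are exactly the independent samples in your coupling: realize both from the same random order, so $E(s)\subseteq E(X_1)$ iff $X_1\ge s$. Its Chernoff bound on $X_i$ is your Chernoff bound on the sample size, and your deterministic $s_\pm$ only reorder the same sandwich $X_2\le S\le X_1$.

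Two small fixes. First, your constants do not close as written: $(1+\varepsilon/8)(1+\varepsilon/20)^3=1+\tfrac{11}{40}\varepsilon+O(\varepsilon^2)$ exceeds $1+\varepsilon/4$, and the lower side fails symmetrically. Shrinking the Chernoff slack to, e.g., $\varepsilon/40$ fixes this. Second, the binding Schudy--Sviridenko case is $r=1$, the $\eta$ term that forces $\varepsilon^{-6}$, not $r=3$; your own computation shows this.
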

\cref{sec:concentration} and \cref{sec:stopping} give the proof. In \cref{sec:concentration}, we analyze sufficient conditions for tight concentration of the sampled number of triangles in the independent-edge sampling setup via the Schudy–Sviridenko inequality \cite{SS}. In \cref{sec:stopping}, we analyze the algorithm's stopping time $S$. Because the prefix of a random order stream behaves like a uniform sample without replacement of the edges, bounding $S$ directly is challenging. To overcome this, we couple the prefix with two binomial random variables $X_1$ and $X_2$ which represent prefix lengths drawn at an over-sampling rate $r_1$ and an under sampling rate $r_2$. Crucially, a random-length prefix behaves identically to analyzing independent edge sampling allowing us to utilize results from~\cref{sec:concentration}. By carefully selecting the rates and the threshold $Q$, we are able to couple $S$ with $X_1,X_2$, and consequently guarantee both tight concentration of $S$ and of the estimator $\widehat{T}$. Finally, we prove \cref{thm:alg-main} in the parameter-free form under the promise $\eta \le T^{2/3}$.

\paragraph{Practical setting of $Q$.}
The constant $c_0$ in \cref{thm:alg-main} is non-explicit, so
rather than fixing $Q$ as a number we fix the \emph{space budget}
directly and let it induce $Q$. Given a target memory fraction $f$, we
run the loop of \cref{alg:threshold} until exactly
$S=\lceil f m\rceil$ edges have been stored, then report
$\widehat T = \widehat t\,(m/S)^3$, where $\widehat t$ is the number of
triangles closed within the stored prefix. This is exactly
\cref{alg:threshold} run with the \emph{implicit} threshold
$Q=\widehat t$: because a uniform random prefix of fraction $f$ contains
$\approx f^3 T$ triangles, the realized threshold concentrates at
$Q\approx f^3 T$. Sweeping the memory budget $f$ is therefore
equivalent to sweeping $Q$, and---crucially---requires no knowledge of
$T$, since $f$ is a quantity the operator already controls. (Deployed
the other way around, with a fixed $Q$ and no budget, the stopping
time $S$ instead \emph{self-adjusts} to the data and attains the
$O(\varepsilon^{-2}\log(1/\delta)\,m/T^{1/3})$ bound of
\cref{thm:alg-main} automatically.) In the implementation the
target $S=\lceil f m\rceil$ is reached by a geometric doubling
schedule; on the dense real streams the process halts essentially at
$\lceil f m\rceil$, while on sparse streams the realized $S$ can exceed
the target before $Q$ triangles accumulate, so we always report the
\emph{realized} $S/m$.

\paragraph{Heavy-edge regime.}
When $\eta > T^{2/3}$, \cref{thm:main} no longer applies in the
parameter-free form, but the analysis of
\cref{sec:concentration}--\cref{sec:stopping} still gives a
$(1\pm\varepsilon)$-approximation provided $Q$ is set to absorb the
factor $\eta/T$. The ratio $\eta/T$ can be estimated from a small
prefix using learned features of the stream; we focus the experiments
below on the parameter-free regime and leave a full learning-augmented
evaluation to companion work.

\section{Concentration of Triangles via Independent Edge Sampling}\label{sec:concentration}

Fix a sampling rate $p\in(0,1]$, and sample each edge independently with probability $p$.
For each edge $e\in E$, let $X_e\in\{0,1\}$ be the indicator that $e$ is sampled.
Define the triangle counting polynomial
\[
Y \;=\; \sum_{\Delta} \prod_{e\in \Delta} X_e,
\]
where the sum is over all triangles $\Delta$ in $G$.
Then $Y$ is the number of sampled triangles, and
\[
\hat T := \frac{Y}{p^3}
\]
is the natural unbiased estimator for $T$. The goal of this section is to derive a sufficient condition on $p$ ensuring $
\Prb\left[\, |\hat T-T| \ge \varepsilon T \,\right] \le \delta$.

\subsection{The Schudy--Sviridenko inequality}

We use the following result by Schudy and Sviridenko  \cite[Theorem 1.4]{SS}.

\begin{theorem}[Schudy--Sviridenko]\label{thm:SS}
Let $f(Y_1,\dots,Y_n)$ be a polynomial of degree $q$ and maximal variable power $\Gamma$ in independent moment-bounded random variables $Y_1,\dots,Y_n$, all with the same moment-boundedness parameter $L$.
Let $\mu_r$ denote the maximum expected partial derivative of order $r$ and  $\mu_0=\E[f(Y)]$.
Then there is an absolute constant $R\ge 1$ such that, for every $\lambda>0$,

\begin{align*}
&\Prb\left[\, |f(Y)-\E f(Y)| \ge \lambda \,\right] \le\\
&
e^2 \cdot
\max\left(
\max_{1\le r\le q} \exp\left\{-\frac{\lambda^2}{\mu_0\mu_r L^r \Gamma^r R^q}\right\},
\;
\max_{1\le r\le q} \exp\left\{-\left(\frac{\lambda}{\mu_r L^r \Gamma^r R^q}\right)^{1/r}\right\}
\right) \ .
\end{align*}
\end{theorem}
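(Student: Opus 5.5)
The theorem is quoted from \cite[Theorem 1.4]{SS}, so the paper may simply invoke it. To prove it directly, I would use the high-moment method: bound $\E\bigl[(f(Y)-\E f(Y))^{k}\bigr]$ for every even $k\ge 2$, then apply Markov's inequality with $k$ tuned to $\lambda$. The target moment bound is
\[
\E\bigl[(f-\E f)^{k}\bigr]^{1/k} \;\le\; \max_{1\le r\le q}\Bigl( \sqrt{k\,\mu_0\mu_r L^r\Gamma^r R^q} \;+\; k^{r}\,\mu_r L^r\Gamma^r R^q \Bigr).
\]
Given this, I would take $k$ of order $\min_r \min\bigl\{\lambda^2/(\mu_0\mu_r L^r\Gamma^rR^q),\ (\lambda/(\mu_rL^r\Gamma^rR^q))^{1/r}\bigr\}$, rounded down to an even integer. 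Markov then gives $\Prb[|f-\E f|\ge\lambda]\le \bigl(\E[(f-\E f)^k]^{1/k}/\lambda\bigr)^k\le e^{-k}$ after absorbing constants into $R$. The factor $e^2$ covers the rounding of $k$ and the case where the optimal $k$ is below $2$, where the bound is trivial.

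For the moment bound, write $f=\sum_{h} w_h \prod_{i\in h} Y_i^{a_{h,i}}$ as a sum over monomials $h$. Centre each monomial by expanding $\prod_i Y_i^{a}=\prod_i\bigl((Y_i^{a}-\E Y_i^{a})+\E Y_i^{a}\bigr)$. This writes $f-\E f$ as a sum of terms in which at least one variable appears in centred form. Expanding the $k$-th power gives a sum over $k$-tuples of such terms. By independence, a tuple has zero expectation unless every centred variable is shared with at least one other factor of the tuple. I would classify the surviving tuples by their sharing pattern, a multigraph on the $k$ factors. Summing over the free variables of a factor that shares $r$ variables with the others produces an expected $r$-th partial derivative, bounded by $\mu_r$. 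The remaining unconstrained sums are bounded by $\mu_0$.

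The moment-boundedness hypothesis $\E|Y_i|^{j}\le jL\,\E|Y_i|^{j-1}$ controls the expectation of a variable raised to a high combined power. Such a variable, appearing in $s$ factors with powers up to $\Gamma$ each, contributes at most $(s\Gamma L)^{\text{(number of extra occurrences)}}$. This is the source of the factor $L^r\Gamma^r$ and of the $k^r$ growth.

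The main obstacle is the combinatorial step: counting sharing patterns carefully enough to obtain the sharp two-regime form above. The count must show that the number of patterns with a given profile is at most $R^{qk}$ times the appropriate power of $k$. A crude count would lose factors like $k^{qk}$ and destroy the Gaussian regime. I would first try to handle this by decoupling: replace the $k$ factors by independent copies, using a standard decoupling inequality with constant $R^q$. I would then induct on the degree $q$, conditioning on the shared variables so that each step lowers the degree. Patterns where factors pair up two at a time yield the $\sqrt{k\mu_0\mu_r}$ term, and chains sharing many variables yield the $k^r\mu_r$ term.
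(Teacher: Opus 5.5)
You are right that the paper does not prove this statement. It quotes \cite[Theorem~1.4]{SS} verbatim and only checks $q=3$, $\Gamma=L=1$ and the bounds on $\mu_0,\dots,\mu_3$ for the triangle polynomial, so your opening sentence is exactly the paper's route.

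Read as a self-contained proof, however, the rest has a genuine gap. It reduces the theorem to a moment inequality, and that inequality is the actual content of \cite{SS}. The reduction itself is fine. If the moment bound holds with constant $R_0$, put $R=4e^2R_0$ and let $k$ be the largest even integer not exceeding $k^\ast=\min_r\min\{\lambda^2/(\mu_0\mu_rL^r\Gamma^rR^q),\,(\lambda/(\mu_rL^r\Gamma^rR^q))^{1/r}\}$. Then $\|f-\E f\|_k\le\lambda/e$, Markov bounds the tail by $e^{-k}\le e^{2-k^\ast}$, and the case $k^\ast<2$ is trivial.

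The moment bound itself, though, is left as your acknowledged main obstacle, and the fix you propose does not work as written. Replacing the $k$ factors of $\E[(f-\E f)^k]$ by independent copies makes the expectation factor into a product of zero means. Decoupling inequalities instead act on the at most $q$ distinct variable slots inside each monomial, which is why their constant depends on $q$ and not on $k$. Even after slot-wise decoupling, conditioning on all copies but one leaves a sum $\sum_i b_iZ_i$ of independent centred terms. A Bernstein-type bound controls this by $\sqrt{k}$ times a weighted $\ell_2$ norm of $(b_i)$ plus $kL\max_i|b_i|$. You would then have to bound the $L^k$ norms of these random quantities by products of the $\mu_r$ of the original uncentred polynomial. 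Since $\sum_ib_i^2$ has degree $2(q-1)\ge q$ once $q\ge2$, induction on the degree does not close by itself. This bookkeeping is exactly where the $\sqrt{k\mu_0\mu_r}$ versus $k^r\mu_r$ dichotomy is decided, and it is not supplied.

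One further point: your step ``the remaining unconstrained sums are bounded by $\mu_0$'' silently uses nonnegative coefficients and nonnegative moments $\E Y_i^{a}$, and this cannot be avoided. Take $f=1+\sum_{i=1}^nY_i$ with independent uniform $\pm1$ signs. Then $q=\Gamma=L=\mu_0=\mu_1=1$, so the displayed bound at $\lambda=\sqrt n$ equals $e^{2-\sqrt n/R}\to0$. Yet $\Prb[|f-\E f|\ge\sqrt n]$ stays bounded away from $0$ by the central limit theorem. The statement as transcribed must therefore be read for nonnegative coefficients and variables, or with the $\mu_r$ computed for the absolute-value majorant of $f$ evaluated at $|Y_i|$. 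This is harmless for the paper, whose triangle polynomial has unit coefficients in Bernoulli variables.
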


\subsection{Applying the concentration result to triangle counting}
We wish to bound the parameters $\mu_r$ of \cref{thm:SS} for the triangle counting polynomial $Y$. Since no edge can appear twice in a single triangle and each triangle is comprised of $3$ edges, $Y$ is multilinear of degree $3$. That is, $\Gamma = 1$ and $q = 3$. The random variables $X_e$ are Bernoulli, $|X_e|\le 1$ and therefore a uniform moment bound is  $L = 1$. For $1\le r \le q = 3$ we compute a bound on $\mu_r$.


\begin{lemma}
For the triangle-counting polynomial
\[
Y=\sum_{\Delta}\prod_{e\in \Delta} X_e,
\]
the Schudy--Sviridenko parameters satisfy $
\mu_0 = p^3 T,
\mu_1 = p^2 \eta,
\mu_2 \le p,$ and $\mu_3 \le 1$.
\end{lemma}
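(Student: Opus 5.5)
The plan is to compute each $\mu_r$ directly from the definition in \cref{thm:SS}. There, $\mu_r$ is the maximum, over all multisets $A$ of $r$ variables, of $\E[\partial_A Y]$. Since $Y$ is multilinear with nonnegative coefficients and $X_e\ge 0$, every partial derivative is a nonnegative polynomial. Its expectation is therefore obtained by summing $p^{(\text{number of remaining edges})}$ over the triangles that survive differentiation.

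The key steps, in order, are as follows.

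\emph{Order $0$.} By linearity of expectation and independence of the three edge indicators in a triangle,
\[
\mu_0=\E[Y]=\sum_\Delta p^3=p^3T.
\]

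\emph{Order $1$.} Differentiating with respect to $X_e$ kills every triangle not containing $e$ and leaves $\prod_{e'\in\Delta\setminus\{e\}}X_{e'}$ for each $\Delta\ni e$. Hence
\[
\E[\partial_e Y]=p^2\tau(e).
\]
Maximizing over $e$ gives $\mu_1=p^2\eta$.

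\emph{Order $2$.} If the two variables coincide, the derivative is $0$ by multilinearity, so take two distinct edges $e,f$. The derivative is the sum, over triangles containing both $e$ and $f$, of the indicator $X_g$ of the third edge. Two distinct edges lie in at most one common triangle: they must share a vertex, and then the third edge is determined. So $\E[\partial_{e,f}Y]\le p$, and $\mu_2\le p$.

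\emph{Order $3$.} Three distinct edges form at most one triangle (exactly one if they form a triangle), so the third derivative is a constant in $\{0,1\}$. Repeated variables again give $0$. Hence $\mu_3\le 1$.

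The only point needing care is the combinatorial fact in the order-$2$ and order-$3$ steps: a pair of distinct edges is contained in at most one triangle, and in a simple graph a triple of edges determines at most one triangle. Both follow because two distinct edges of a triangle determine its vertex set. Everything else is linearity of expectation and independence, so I expect no real obstacle.
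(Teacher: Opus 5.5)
Your proposal is correct and follows the paper's proof essentially step for step: you compute each $\mu_r$ directly by linearity and independence, and the order-$2$ and order-$3$ bounds rest on the fact that two distinct edges of a simple graph lie in at most one common triangle. Your explicit remark that derivatives with a repeated variable vanish by multilinearity is a small point the paper leaves implicit.
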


\begin{proof}
First, 
\[
\mu_0 = \E[Y] = \sum_{\Delta} \E\left[\prod_{e\in \Delta} X_e\right] = p^3 T.
\]
For $r=1$, differentiating with respect to an edge-variable $X_e$ gives
\[
\frac{\partial Y}{\partial X_e}
=
\sum_{\Delta\ni e}\prod_{f\in \Delta\setminus\{e\}} X_f.
\]
Here, the notation $\Delta\ni e$ means counting over all triangles containing the edge $e$. Taking expectations,
\[
\E\left[\frac{\partial Y}{\partial X_e}\right]
=
\sum_{\Delta\ni e} p^2
=
p^2 \tau(e).
\]
Therefore $
\mu_1 = \max_e p^2 \tau(e) = p^2 \eta$. 

For $r=2$, fix distinct edges $e,h$.
Then
$
\frac{\partial^2 Y}{\partial X_e\,\partial X_h}
$
is nonzero if and only if $e$ and $h$ share a common triangle.
Since the graph is simple, two edges share at most one triangle.
Hence
\[
\E\left[\frac{\partial^2 Y}{\partial X_e\,\partial X_h}\right]
\le p,
\]
for all pairs of edges $e,h$, and thus $\mu_2\le p$.

For $r=3$, recall that $Y$ is multilinear of degree $3$ with coefficient $1$ for all summands. Therefore the third mixed derivative is $1$ if and only if the three differentiated edges form a triangle and $0$ otherwise.
Hence $\mu_3\le 1$.
\end{proof}

We proceed to apply \cref{thm:SS} according to the bounds computed.
Set $
\lambda := \varepsilon \mu_0 = \varepsilon p^3 T$
and then \[
\Prb\left[\, |\hat T-T| \ge \varepsilon T \,\right]
=
\Prb\left[\, |Y-\E Y| \ge \varepsilon \E Y \,\right]
=
\Prb\left[\, |Y-\mu_0| \ge \lambda \,\right].
\]

 We set
\[
A_r := \frac{\lambda^2}{\mu_0 \mu_r R^3},
\qquad
B_r := \left(\frac{\lambda}{\mu_r R^3}\right)^{1/r}
\]
where $R$ is the constant from \cref{thm:SS}.
In this notation, \cref{thm:SS} states
\[
    \Prb\left[\, |Y-\mu_0| \ge \lambda \,\right]
\le
e^2 \exp\left(
- \min\{A_1,A_2,A_3,B_1,B_2,B_3\}
\right),    
\]
Using the upper bounds on $\mu_r$, we obtain the following lower bounds on $A_r,B_r$:
\begin{align*}
A_1 &\ge \frac{\varepsilon^2 pT}{\eta R^3}, & A_2 &\ge \frac{\varepsilon^2 p^2 T}{R^3}, & A_3 &\ge \frac{\varepsilon^2 p^3 T}{R^3},\\
B_1 &\ge \frac{\varepsilon pT}{\eta R^3},   & B_2 &\ge \left(\frac{\varepsilon p^2 T}{R^3}\right)^{1/2}, & B_3 &\ge \left(\frac{\varepsilon p^3 T}{R^3}\right)^{1/3}.
\end{align*}
%

Since $\varepsilon \le 1$, $B_1 \ge A_1$. Since $p\le 1$ we have $A_2\ge A_3$. Moreover, 
\[\left(\varepsilon p^2T/R^3\right)^{1/2}
\ge
\left(\varepsilon p^3T/R^3\right)^{1/2}
\ge
\left(\varepsilon p^3T/R^3\right)^{1/3}\] and therefore $B_2\ge B_3$.
Thus the inequality simplifies to
\begin{eqnarray}
\label{eq:main-tail}
& & \Prb\left[\, |\hat T-T| \ge \varepsilon T \,\right]\nonumber \\
& \le &
e^2\exp\left(-\min \left\{A_1,A_3,B_3\right\}\right)\nonumber\\
& = &
e^2 \exp\left(
- \min\left\{
\frac{\varepsilon^2 pT}{\eta R^3},
\frac{\varepsilon^2 p^3 T}{R^3},
\left(\frac{\varepsilon p^3 T}{R^3}\right)^{1/3} 
\right\}
\right).
\end{eqnarray}
We arrive at the following sufficient condition for concentration of $\hat T$ around $T$.

\begin{theorem}\label{thm:indp}
Let $G$ be a simple graph with $T$ triangles and 
\[
\eta = \max_{e\in E}\tau(e).
\]
Sample each edge independently with probability $p$, let $Y$ be the number of sampled triangles, and define $\hat T=Y/p^3$. There is an absolute constant $C>0$ such that if $p\geq p^\star$ where 
\begin{equation} \label{eq:pstar}
p^\star (\varepsilon,\delta) \;=\;
\max\left\{
\frac{C\,\eta\,L_\delta}{\varepsilon^2 T},
\;
\left(\frac{C\,L_\delta}{\varepsilon^2 T}\right)^{1/3},
\;
\left(\frac{C\,L_\delta^3}{\varepsilon T}\right)^{1/3}
\right\}
\end{equation}
where $L_\delta=2+\log(1/\delta)$. Then $ \Prb\left[\, |\hat T-T| \ge \varepsilon T \,\right] \le \delta$.
\end{theorem}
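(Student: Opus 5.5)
The plan is to start from the simplified tail bound \eqref{eq:main-tail}. It suffices to choose $C$ so that, whenever $p\ge p^\star$, the minimum of the three exponents is at least $\log(e^2/\delta)=2+\log(1/\delta)=L_\delta$. Then $e^2\exp(-\min\{A_1,A_3,B_3\})\le e^2 e^{-L_\delta}=\delta$. Each of the three terms in \eqref{eq:pstar} is designed to control exactly one of the three exponents, so the proof reduces to three separate verifications. I will take $C\ge R^3$, where $R\ge1$ is the absolute constant of \cref{thm:SS}; concretely $C:=R^3$ works.

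\emph{Term $A_1$.} If $p\ge C\eta L_\delta/(\varepsilon^2T)$, then
\[
A_1\ge \frac{\varepsilon^2pT}{\eta R^3}\ge \frac{C L_\delta}{R^3}\ge L_\delta .
\]

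\emph{Term $A_3$.} If $p\ge (CL_\delta/(\varepsilon^2T))^{1/3}$, then $p^3\ge CL_\delta/(\varepsilon^2T)$, so
\[
A_3\ge \frac{\varepsilon^2p^3T}{R^3}\ge \frac{C L_\delta}{R^3}\ge L_\delta .
\]

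\emph{Term $B_3$.} If $p\ge (CL_\delta^3/(\varepsilon T))^{1/3}$, then $\varepsilon p^3T\ge CL_\delta^3$, so
\[
B_3\ge \Bigl(\frac{\varepsilon p^3T}{R^3}\Bigr)^{1/3}\ge \Bigl(\frac{C}{R^3}\Bigr)^{1/3}L_\delta\ge L_\delta .
\]

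Since $p\ge p^\star$ means $p$ is at least each of the three terms, all three inequalities hold simultaneously. Hence $\min\{A_1,A_3,B_3\}\ge L_\delta$, and by \eqref{eq:main-tail} we get $\Prb[|\hat T-T|\ge\varepsilon T]\le\delta$.

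There is no real obstacle; the only points needing care are bookkeeping. First, the reduction to \eqref{eq:main-tail} used $p\le1$ and $\varepsilon\le1$, which are standing assumptions. If $p^\star>1$ the hypothesis is vacuous, or one can take $p=1$, in which case $\hat T=T$ deterministically. Second, the lower bounds on $A_r,B_r$ came from the upper bounds $\mu_2\le p$ and $\mu_3\le1$ of the preceding lemma, which only make the exponents smaller, so using them is legitimate. Third, $\log(e^2/\delta)$ equals $L_\delta$ exactly (natural logarithm), so no slack is lost there.
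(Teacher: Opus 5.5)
Your proof is correct and follows the paper's argument: reduce to \eqref{eq:main-tail}, take $C=R^3$, and check that each of the three terms of $p^\star$ forces the matching exponent ($A_1$, $A_3$, $B_3$) to be at least $L_\delta=\log(e^2/\delta)$; you write out the three rearrangements that the paper compresses into one line. One small bookkeeping correction: the simplification $B_2\ge B_3$ behind \eqref{eq:main-tail} uses not only $p\le 1$ and $\varepsilon\le 1$ but also $\varepsilon p^3T/R^3\ge 1$, which is not a standing assumption but follows from the third term of $p^\star$ (it gives $\varepsilon p^3T/R^3\ge L_\delta^3\ge 8$), so nothing breaks.
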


\begin{proof}
A sufficient condition for \cref{eq:main-tail} to be bounded above by $\delta$ is:
\begin{align}
\min\left (\frac{\varepsilon^2 pT}{\eta R^3} , 
\frac{\varepsilon^2 p^3 T}{R^3} , 
\left(\frac{\varepsilon p^3 T}{R^3}\right)^{1/3} \right )\ge L_\delta. \label{eq:cond3}
\end{align} 
Take $C = R^3$ where $R$ is the constant from~\autoref{thm:SS}. Rearranging the three lower bounds on $p^\star$ in \eqref{eq:pstar} gives exactly the displayed minimum condition in \eqref{eq:cond3}. 
\end{proof}

\section{Stopping Times for Edge Sampling without Replacement}\label{sec:stopping}






The goal of this section is to apply the independent sampling result from \cref{sec:concentration} to a sampling without replacement setting. Fix a uniformly random ordering of the \(m\) edges of \(G\).
For \(t\in\{0,1,\dots,m\}\), let \(E(t)\) denote the set of the first \(t\) edges in this ordering, and let
$
N(t)
$
denote the number of triangles contained in \(E(t)\).
We wish to  analyze the concentration of the stopping time
\[
S :=\min\{\, t : N(t)\ge Q \,\}.
\]
for an appropriate choice of a threshold $Q$. Bounding $S$ directly is difficult as edges in $E(t)$ are not drawn independently. In order to utilize \Cref{thm:indp}, we look for a sampling rate $r$ with the following properties:
\begin{enumerate}
    \item The rate $r$ is large enough to guarantee tight concentration around $T$ as in \Cref{thm:indp}.
    \item The threshold $Q$ is a reliable indicator that we have processed an $r$-fraction of the stream, and not much more. 
\end{enumerate}
Then, instead of directly analyzing $S$, we couple $S$ with two random variables $X_1,X_2$ with two distinct sampling rates $r_1,r_2$. We think of $r_1$ as an over sampling rate, and $r_2$ as an under sampling rate. Crucially, taking a prefix of a random length $X_i$ behaves like independent edge sampling from $E$, allowing us to utilize \autoref{thm:indp}. Our choice of a threshold $Q$ will separate these two sampling rates, and imply strong concentration of $S$. We proceed to formalize this approach.

\subsection{Coupling the Stopping Time}

Let \(p^\star\) denote the critical value of \(p\) from \cref{thm:indp}.  Let 
\[r\geq (1-\varepsilon)^{-1}  \cdot \max(p^\star,
3\varepsilon^{-2} \log (1/\delta)/m) \ .\]
Define two sampling rates
\[r_1:= (1+\varepsilon)\, r
\quad \mbox{ and } \quad r_2 := (1-\varepsilon)\, r . \]


Consider two random variables \(X_1\sim \mathrm{Bin}(m,r_1)\) and \(X_2\sim \mathrm{Bin}(m,r_2)\). 

\begin{lemma}\label{lem:binom to ind}
For each \(i\in\{1,2\}\), the random set \(E(X_i)\) has the same distribution as an independent \(r_i\)-sample of the edge set \(E\).
\end{lemma}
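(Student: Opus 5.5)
The plan is to compute the distribution of $E(X_i)$ directly by conditioning on its size, using two facts: $X_i$ is drawn independently of the uniformly random ordering, and a prefix of fixed length of a uniform ordering is a uniform subset of that size.

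\textbf{Step 1: size law.} Fix $i$ and write $r=r_i$. In an independent $r$-sample $\mathcal{I}$ of $E$, each edge is kept with probability $r$, independently. Hence $|\mathcal{I}|\sim\mathrm{Bin}(m,r)$. By definition $|E(X_i)|=X_i\sim\mathrm{Bin}(m,r)$ as well, so the two random sets have the same size distribution.

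\textbf{Step 2: conditional law given the size.} Condition on $|\mathcal{I}|=k$. For any fixed $F\subseteq E$ with $|F|=k$,
\[
\Pr[\mathcal{I}=F]=r^k(1-r)^{m-k}.
\]
This value does not depend on $F$, so $\mathcal{I}$ is uniform over $k$-subsets of $E$. On the other side, condition on $X_i=k$. Since $X_i$ is independent of the ordering, the set $E(k)$ is the set of the first $k$ positions of a uniform permutation. That set is uniform over $k$-subsets, since each $k$-subset arises from exactly $k!(m-k)!$ orderings.

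\textbf{Step 3: combine.} For every $F\subseteq E$ with $|F|=k$,
\[
\Pr[E(X_i)=F]=\binom{m}{k}r^k(1-r)^{m-k}\cdot\binom{m}{k}^{-1}=r^k(1-r)^{m-k}=\Pr[\mathcal{I}=F].
\]
The two random sets therefore have identical distributions.

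The only point needing care is that $X_i$ must be sampled independently of the edge ordering; this should be stated explicitly as part of the coupling construction. Once that independence is in place, the argument involves no real obstacle.
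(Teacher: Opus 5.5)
Your proposal is correct and is essentially the paper's proof: you condition on $X_i=k$, use that a length-$k$ prefix of a uniform ordering is a uniform $k$-subset, and get $\Pr[E(X_i)=F]=r_i^k(1-r_i)^{m-k}$. Your explicit requirement that $X_i$ be drawn independently of the edge ordering is left implicit in the paper's proof, and it is worth stating.
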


\begin{proof}
Fix \(i\in\{1,2\}\) and a subset \(F\subseteq E\) of size \(k\).
Then
\begin{eqnarray*}
\Prb[E(X_i)=F]
& = & 
\Prb[X_i=k]\cdot \frac{1}{\binom{m}{k}} \\
& = &
\binom{m}{k}r_i^k(1-r_i)^{m-k}\cdot \frac{1}{\binom{m}{k}} \\
& =& 
r_i^k(1-r_i)^{m-k},
\end{eqnarray*}
which is exactly the probability that \(F\) is obtained by keeping each edge independently with probability \(r_i\).
\end{proof}
Consequently, the number of triangles in \(E(X_i)\) has the same distribution as the random variable \(Y\) from \cref{sec:concentration} with sampling rate \(r_i\). Set \(Q=r^3 T\). We show that the choice of $Q$ separates the two sampling rates with probability at least \(1-\delta\). Denote by $\Prb_{p = r_i}$ the probability space where each edge in $E$ is sampled independently with probability $r_i$.

\begin{lemma}\label{lem:Qdivide} 
$
\Prb_{p=r_1}[Y> Q] \ge 1-\delta$
and 
$\Prb_{p=r_2}[Y<Q] \ge 1-\delta$.
\end{lemma}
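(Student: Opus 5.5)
We apply \cref{thm:indp} at each of the two rates $r_1$ and $r_2$, with a modified accuracy parameter, and translate the threshold $Q=r^3T$ into a relative deviation from the mean $\E_{p=r_i}[Y]=r_i^3T$. The key observation is that $Q$ sits a constant-factor-in-$\varepsilon$ distance away from both means:
\[
r_1^3T=(1+\varepsilon)^3Q,\qquad r_2^3T=(1-\varepsilon)^3Q .
\]

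\emph{Oversampled rate $r_1$.} The event $Y\le Q$ means $Y\le (1+\varepsilon)^{-3}\,r_1^3T$, so $\hat T=Y/r_1^3\le (1+\varepsilon)^{-3}T$. Since $(1+\varepsilon)^{-3}\le 1-\varepsilon$ for $0<\varepsilon\le 1/2$ (indeed $(1-\varepsilon)(1+\varepsilon)^3\ge 1$ there), this event forces $|\hat T-T|\ge \varepsilon T$. By \cref{thm:indp} at accuracy $\varepsilon$ this has probability at most $\delta$, provided $r_1\ge p^\star(\varepsilon,\delta)$. That holds because $r_1\ge r\ge (1-\varepsilon)^{-1}p^\star\ge p^\star$. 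Since $r_1$ could exceed $1$, I would cap it, or note that $r\le 1/(1+\varepsilon)$ is implicitly assumed. If $r_1\ge 1$ then $Y=T>Q$ deterministically, so the claim is trivial in that case.

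\emph{Undersampled rate $r_2$.} The event $Y\ge Q$ means $\hat T=Y/r_2^3\ge (1-\varepsilon)^{-3}T\ge (1+3\varepsilon)T\ge(1+\varepsilon)T$, so again $|\hat T-T|\ge\varepsilon T$. \cref{thm:indp} then requires $r_2\ge p^\star(\varepsilon,\delta)$, which is exactly what the hypothesis $r\ge (1-\varepsilon)^{-1}p^\star$ gives: $r_2=(1-\varepsilon)r\ge p^\star$. This is the reason for the factor $(1-\varepsilon)^{-1}$ in the definition of $r$.

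\emph{Main obstacle.} The only delicate point is checking the two elementary inequalities $(1+\varepsilon)^{-3}\le 1-\varepsilon$ and $(1-\varepsilon)^{-3}\ge 1+\varepsilon$ on $(0,1/2]$. One could even use accuracy about $2\varepsilon$ here, but $\varepsilon$ suffices. Otherwise, the argument just verifies that both rates lie above $p^\star$, including the boundary handling when $r_1>1$. The $3\varepsilon^{-2}\log(1/\delta)/m$ term in $r$ is not needed for this lemma; it is presumably reserved for Chernoff concentration of $X_1,X_2$ in the later coupling step.
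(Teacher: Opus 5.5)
Your proof is correct and follows the paper's argument. Both rates satisfy $r_1,r_2\ge(1-\varepsilon)r\ge p^\star$, so \cref{thm:indp} at accuracy $\varepsilon$ applies, and the separations $(1-\varepsilon)(1+\varepsilon)^3>1>(1+\varepsilon)(1-\varepsilon)^3$ put $Q=r^3T$ outside the $(1\pm\varepsilon)r_i^3T$ windows (you just phrase this contrapositively). Your remark on $r_1\ge 1$ goes beyond the paper, which tacitly assumes $r_1\le 1$; note that the ``trivial'' case $Y=T>Q$ needs $r<1$, i.e.\ $Q<T$.
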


\begin{proof}
Recall that \cref{thm:indp} applies at any sample rate that is at least $p^\star$, and that $r_1,r_2 \ge (1-\varepsilon)r \ge p^\star$. Then for $r_1$
by \cref{thm:indp} with probability at least \(1-\delta\),
\[
Y\ge (1-\varepsilon)r_1^3 T = (1-\varepsilon)(1+\varepsilon)^3 Q  >  Q,
\]
which proves the first inequality. Similarly, for $r_2$ with probability at least \(1-\delta\),
\[
Y\le (1+\varepsilon)r_2^3 T  = (1+\varepsilon)(1-\varepsilon)^3 Q < Q
\ .
\]
proving the second inequality.
\end{proof}

Combined with Lemma \ref{lem:binom to ind}, Lemma \ref{lem:Qdivide} transfers immediately to random prefixes.

\begin{lemma}\label{lem:NX}
$
\Prb\bigl[N(X_1)> Q\bigr]\ge 1-\delta$
 and 
$\Prb\bigl[N(X_2)<Q\bigr]\ge 1-\delta
$. 
\end{lemma}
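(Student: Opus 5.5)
The plan is to transfer \cref{lem:Qdivide} to random prefixes through the distributional identity of \cref{lem:binom to ind}. The key point is that $N(X_i)$ is a deterministic function of the random set $E(X_i)$: it counts the triangles whose three edges all lie in that set. So $N(X_i)$ has exactly the distribution of $Y$ under $\Prb_{p=r_i}$.

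The first step is to make the probability space explicit. We take the uniformly random ordering of $E$ and an independent draw $X_i\sim\mathrm{Bin}(m,r_i)$. Conditioning on $X_i=k$, the prefix $E(k)$ is a uniformly random $k$-subset of $E$, because the ordering is uniform and independent of $X_i$. This is precisely the computation in the proof of \cref{lem:binom to ind}, which gives
\[
\Prb[E(X_i)=F]=r_i^{|F|}(1-r_i)^{m-|F|}\quad\text{for every } F\subseteq E.
\]

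Next, we write $g(F)$ for the number of triangles of $G$ contained in $F$, so that $N(X_i)=g(E(X_i))$. For any threshold event we then have
\[
\Prb[N(X_1)>Q]=\sum_{F:\,g(F)>Q}\Prb[E(X_1)=F]=\Prb_{p=r_1}[Y>Q],
\]
and by \cref{lem:Qdivide} this is at least $1-\delta$. The same computation with the event $g(F)<Q$ and the rate $r_2$ gives $\Prb[N(X_2)<Q]=\Prb_{p=r_2}[Y<Q]\ge 1-\delta$.

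The only subtle point is the independence of $X_i$ from the ordering. This is what makes the conditional law of the prefix uniform, and it is implicit in \cref{lem:binom to ind}. I would state it explicitly in the proof, because later the same ordering will be coupled with the stopping time $S$. Beyond that, the argument is a direct pushforward of equal distributions under the map $g$, and I expect no real obstacle.
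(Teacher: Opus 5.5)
Your proposal is correct and follows the paper's proof: the paper likewise uses \cref{lem:binom to ind} to identify $N(X_i)$ in distribution with $Y$ under $\Prb_{p=r_i}$, and then applies \cref{lem:Qdivide}. Your explicit requirement that $X_i$ be drawn independently of the ordering is a useful clarification of an assumption the paper leaves implicit.
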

\begin{proof}
    By Lemma \ref{lem:binom to ind},
    \[
    \Prb\bigl[N(X_1)> Q \bigr] = \Prb_{p=r_1}\bigl[Y > Q \bigr]\ge 1-\delta,
    \]
    and a similar argument holds for $N(X_2)$.
\end{proof}

%
%

We can now bound the stopping time $S$.

\begin{corollary}[Concentration of $S$] \label{cor:almostthere}
\[
\Pr[mr(1-\varepsilon)^2 < S < mr(1+\varepsilon)^2]\geq 1-4\delta \ .
\]
\end{corollary}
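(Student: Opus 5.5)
The plan is to combine monotonicity of $N(t)$ in $t$ with Lemma~\ref{lem:NX} and a Chernoff bound on the binomial prefix lengths $X_1,X_2$. Since the prefix sets $E(t)$ are nested, $N(t)$ is nondecreasing in $t$. So $S\le t$ if and only if $N(t)\ge Q$. Because $X_1,X_2$ can be taken independent of the random ordering, we get two implications:
\[
N(X_1)>Q \;\Rightarrow\; S\le X_1,
\qquad
N(X_2)<Q \;\Rightarrow\; S> X_2 .
\]
By Lemma~\ref{lem:NX}, each of the events $\{N(X_1)>Q\}$ and $\{N(X_2)<Q\}$ fails with probability at most $\delta$.

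Next, control the lengths themselves. We have $\E X_1=(1+\varepsilon)mr$ and $\E X_2=(1-\varepsilon)mr$. A multiplicative Chernoff bound gives
\[
\Prb[X_1\ge (1+\varepsilon)\E X_1]\le \exp(-\varepsilon^2 \E X_1/3),
\qquad
\Prb[X_2\le (1-\varepsilon)\E X_2]\le \exp(-\varepsilon^2\E X_2/2).
\]
The hypothesis $r\ge (1-\varepsilon)^{-1}\cdot 3\varepsilon^{-2}\log(1/\delta)/m$ gives $\E X_2=(1-\varepsilon)mr\ge 3\varepsilon^{-2}\log(1/\delta)$, and hence also $\E X_1\ge 3\varepsilon^{-2}\log(1/\delta)$. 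Both exponents are therefore at least $\log(1/\delta)$, so each tail is at most $\delta$. On the complement of these two tails we have $X_1<mr(1+\varepsilon)^2$ and $X_2>mr(1-\varepsilon)^2$.

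A union bound over the four bad events (two from Lemma~\ref{lem:NX}, two Chernoff tails) shows that with probability at least $1-4\delta$,
\[
mr(1-\varepsilon)^2 < X_2 < S \le X_1 < mr(1+\varepsilon)^2 .
\]
This is the claim.

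The main point to handle carefully is the legitimacy of the coupling. The random ordering and the binomial lengths $X_i$ must live on one probability space with $X_i$ independent of the ordering, so that Lemma~\ref{lem:binom to ind} applies while the same $S$ is used for both comparisons. Since $S$ is a function of the ordering alone, drawing $X_1,X_2$ independently of it makes this immediate. The only remaining detail is getting the strict versus non-strict inequalities right at the endpoints.
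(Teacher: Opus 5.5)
Your proposal is correct and follows essentially the same route as the paper. Both arguments sandwich $S$ between $X_2$ and $X_1$ via Lemma~\ref{lem:NX}, bound $X_1$ and $X_2$ with multiplicative Chernoff bounds, and union-bound the four failure events. Your version is slightly more careful than the paper's in three places: you justify the implications through monotonicity of $N(t)$, you note that $N(X_1)>Q$ only gives $S\le X_1$ rather than $S<X_1$, and you explicitly use $r\ge (1-\varepsilon)^{-1}\cdot 3\varepsilon^{-2}\log(1/\delta)/m$ to make each Chernoff tail at most $\delta$.
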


\begin{proof}
Recall that $S = \min\{t : N(t) \ge Q\}$. By Lemma \ref{lem:NX}, $N(X_1) > Q$ with probability at least $1-\delta$, and therefore $X_1 \ge S$. Similarly, $X_2 \le S$. with probability at least $1-\delta$ By the union bound
\[\Prb\left[X_2 < S  < X_1\right]\ge 1-2\delta \ .\] 
Applying the standard multiplicative Chernoff bound to \(X_1 \sim \mathrm{Bin}(m,r_1)\) gives
\[
\Prb\left[X_1 < (1+\varepsilon)r_1m\right] \ge 1-\delta 
\]
and similarly
\[
\Prb\left[X_2 > (1-\varepsilon)r_2m\right] \ge 1-\delta.
\]
Taking the union bound over both gives
\[
\Prb\left[\, 
(1-\varepsilon ) r_2m \leq X_2 \mbox{ and } X_1\leq (1+\varepsilon ) r_1m 
\right]\ge 1-2\delta.
\]
Plugging in the definition of $r_1,r_2$ gives
\[
\Prb\left[\, 
(1-\varepsilon )^2 rm \leq X_2 \mbox{ and } X_1\leq (1+\varepsilon )^2 rm 
\right]\ge 1-2\delta.
\]
Alongside $\Prb\left[X_2 < S  < X_1\right]\ge 1-2\delta$, the union bound implies the corollary.
\end{proof}
\subsection{Designing a Triangle Estimator}
By Corollary \ref{cor:almostthere}, the number of edges one must process before seeing \(Q\) triangles is tightly concentrated around $mr$. We wish to use $S$ as an estimator for the underlying sampling rate $r$. Since $
T={Q}/{r^3}$
and  \(S \approx mr\), this motivates the following  estimate of the number of triangles in \(G\): 

\[
\widehat T:=Q\left(\frac{m}{S }\right)^3.
\]

\begin{corollary}[Triangle Estimation]\label{cor:triangleestimation}
\[
\Pr \left [
\frac{T}{(1+\varepsilon)^6}
\;\le\;
\widehat T
\;\le\;
\frac{T}{(1-\varepsilon)^6} \right ]\geq 1-4\delta \ .
\]
\end{corollary}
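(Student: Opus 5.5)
The plan is to derive the statement directly from Corollary~\ref{cor:almostthere}. The estimator $\widehat T = Q(m/S)^3$ is a deterministic, monotone decreasing function of $S$, so all the randomness has already been handled there. We work on the event
\[
\mathcal{E} := \bigl\{ mr(1-\varepsilon)^2 < S < mr(1+\varepsilon)^2 \bigr\},
\]
which by Corollary~\ref{cor:almostthere} has probability at least $1-4\delta$. We then show that $\widehat T$ lies in the claimed interval on $\mathcal{E}$.

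The calculation uses the choice $Q = r^3 T$ made before Lemma~\ref{lem:Qdivide}. On $\mathcal{E}$, dividing by $m$ and inverting (everything is positive) gives
\[
\frac{1}{r(1+\varepsilon)^2} < \frac{m}{S} < \frac{1}{r(1-\varepsilon)^2}.
\]
Cubing preserves the inequalities, so $(m/S)^3$ lies strictly between $r^{-3}(1+\varepsilon)^{-6}$ and $r^{-3}(1-\varepsilon)^{-6}$. Multiplying by $Q = r^3T$ cancels the $r^3$ and gives
\[
\frac{T}{(1+\varepsilon)^6} < \widehat T < \frac{T}{(1-\varepsilon)^6}
\]
on $\mathcal{E}$. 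This is the claimed bound, with the non-strict inequalities of the statement following a fortiori.

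There is no real obstacle here; the step needing care is bookkeeping. First, the interval $\mathcal{E}$ must have positive, well-defined endpoints when $S$ is inverted, which holds since $0<\varepsilon\le 1/2$ and $r>0$. Second, $Q=r^3T$ is the same threshold used in Lemma~\ref{lem:NX}, so the event in Corollary~\ref{cor:almostthere} refers to the same $S$ that defines $\widehat T$. If $Q$ must be an integer, we can take $r$ so that $r^3T$ is an integer, or absorb the rounding into $\varepsilon$. Rescaling $\varepsilon$ by a constant factor and $\delta$ by $4$ then turns this into the $(1\pm\varepsilon)$, $1-\delta$ form of \cref{thm:main}, but that rescaling is not needed for the corollary itself.
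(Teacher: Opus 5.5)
Your proposal is correct and matches the paper's proof: both substitute $Q=r^3T$ to write $\widehat T=(mr/S)^3T$, then invert and cube the two-sided bound on $S$ from Corollary~\ref{cor:almostthere}, which holds with probability at least $1-4\delta$. Your extra remarks on positivity of the endpoints and on integrality of $Q$ are harmless bookkeeping that the paper leaves implicit.
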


\begin{proof}
Since
\[
\widehat T
=
Q\left(\frac{m}{S }\right)^3
=
r^3T\left(\frac{m}{S }\right)^3
=
\left(\frac{mr}{S }\right)^3 T,
\]
By  Corollary \ref{cor:almostthere}, with probability at least $1-4\delta$
\[
mr(1-\varepsilon)^2 < S \le mr(1+\varepsilon)^2.
\]
So with probability at least $1-4\delta$
\[
\frac{T}{(1+\varepsilon)^6} \le \left(\frac{mr}{S }\right)^3T \le  \frac{T}{(1-\varepsilon)^6}. \qedhere
\] 
\end{proof}

\subsection{Proof of Main Theorem}

\begin{proof}[Proof of \cref{thm:main}]

Recall that 
\[
p^\star (\varepsilon,\delta) \;=\;
\max\left\{
\frac{C\,\eta\,L_\delta}{\varepsilon^2 T},
\;
\left(\frac{C\,L_\delta}{\varepsilon^2 T}\right)^{1/3},
\;
\left(\frac{C\,L_\delta^3}{\varepsilon T}\right)^{1/3}
\right\}
\]
If $\eta \leq T^{2/3}$, all three terms are $O(\varepsilon^{-2} \log (1/\delta)/T^{1/3})$ and so  \[p^\star=O(\varepsilon^{-2} \log (1/\delta)/T^{1/3}) \ . \]
%
Hence, we set $r= c_0^{1/3}\varepsilon^{-2} \log (1/\delta)/T^{1/3}$ for sufficiently large $c_0$. Since $T^{1/3}\le m$,
\begin{flalign*}
& \frac{c_0^{1/3}\varepsilon^{-2}\log(1/\delta)}{T^{1/3}} \ge \frac{c_0^{1/3}\varepsilon^{-2}\log(1/\delta)}{m}
\end{flalign*}
and therefore 
\[r\geq (1-\varepsilon)^{-1} \max(p^\star, 3 \varepsilon^{-2} \log (1/\delta)/m) \ .\]
Consequently
\[
r^3T=\left(\frac{c_0^{1/3}\varepsilon^{-2} \log (1/\delta)}{T^{1/3}}\right)^3T = c_0\varepsilon^{-6}\log^3(1/\delta).
\]
Setting $Q = r^3T = c_0\varepsilon^{-6}\log^3(1/\delta) $ and $\widehat{T} = \left(\frac{m}{S}\right)^3Q$, by Corollary \ref{cor:triangleestimation}
\[
\Pr \left [
\frac{T}{(1+\varepsilon)^6}
\;\le\;
\widehat T
\;\le\;
\frac{T}{(1-\varepsilon)^6} \right ]\geq 1-4\delta \ .
\]
Reparameterizing by $\varepsilon \leftarrow \varepsilon/10$ and $\delta\leftarrow \delta/4$ gives the result.
\end{proof}

\section{Experimental Evaluation}\label{sec:experiments}

The experiments test the main promise of the paper: the threshold algorithm
chooses a short prefix on its own and estimates $T$ accurately by \emph{only} looking at the
prefix of the stream.
With a fixed stored-edge budget, the best reservoir
samplers are often more accurate because they produce an estimate
after reading the \emph{entire} stream; 
the threshold algorithm is the \emph{first early-stopping} algorithm that achieves
comparable approximations looking at \emph{only} a subset of the stream.
Our advantage is that the algorithm
stops early, needs no estimate of $T$, and still returns a whole-graph
estimate.

\paragraph{Experimental Setup.}
We use six temporal streams from SNAP~\cite{leskovec2014snap}, the
Network Repository~\cite{networkrepo}, and KONECT~\cite{konect}
(\emph{copresence-InVS15}, \emph{reddit},
\emph{sx-superuser}, \emph{wiki-talk-temporal}, \emph{ca-cit-HepPh}
(the KONECT co-citation network; $m=3{,}148{,}447$), and
\emph{sx-stackoverflow}), plus \emph{com-orkut} and
\emph{com-friendster} for large-scale analysis. 
We remove self-loops and duplicate
edges. All accuracy experiments use uniformly random edge orders,
matching \cref{thm:alg-main}; the true triangle count is used
only for scoring. \cref{alg:threshold} is implemented in
GBBS~\cite{dhulipala2018theoretically} with
ParlayLib~\cite{parlaylib}; ground-truth counts use
Shun--Tangwongsan~\cite{ST15}. We re-implemented the remaining baselines:
TRI\`EST~\cite{destefani2016triest}, ThinkD~\cite{ShinThinkD2018},
WRS~\cite{ShinWRS2017}, GREAT~\cite{Wu2025great}, and
MASCOT~\cite{LimKang2015mascot}; DOULION~\cite{TsourakakisKangMillerFaloutsos2009},
MV20~\cite{mcgregor2020triangle}, and triangle
sparsification~\cite{TsourakakisKolountzakisMiller2011}; wedge
and colorful samplers~\cite{seshadhri2013wedge,PaghTsourakakis2012colorful};
and related estimators~\cite{buriol2006datastreamtriangles,jha2013birthday,pavan2013vldb,Jung2016furl}.
All use the same C++/GBBS setup so every method sees identical input,
representation, timing, and error accounting. Unless stated otherwise,
accuracy entries report $\mathbb{E}_{\pi}[|\widehat T_{\pi}-T|/T]$,
where $\pi$ ranges over independent random stream orders (every trial
draws a fresh seeded permutation; no permutation is reused across
operating points or across algorithms). We use: $30$ random orders for
fixed-prefix accuracy rows (including the err.@$10\%$ column); $10$ per
$Q$ for self-sizing; $10$ per operating point for MV20; $20$ orders for
matched-space and iso-accuracy rows; and $5$ for stream-read.
Experiments run on shared-cluster nodes with two $32$-core Intel Xeon
Gold 8562Y+ CPUs, one thread per core, and $1$--$4$ TB of DDR5.
The real streams and \emph{com-orkut} use a $1$ TB node;
\emph{com-friendster} uses $4$ TB. We compile the GBBS/ParlayLib code
with \texttt{g++}. 

\begin{table}[!tb]\centering
\caption{Graphs used in the evaluation. The first six are real temporal
streams; the last two are static graphs streamed in uniformly random
order for scalability.}
\label{table:graph-info}
\small
\begin{tabular}{|l|r|r|r|}
\toprule
Graph & vertices & edges & triangles \\
\midrule
copresence-InVS15 & 219 & 16{,}725 & 713{,}002\\
reddit & 35{,}776 & 124{,}330 & 406{,}391\\
sx-superuser & 192{,}409 & 714{,}570 & 1{,}543{,}161\\
wiki-talk-temporal & 1{,}094{,}018 & 2{,}787{,}967 & 8{,}113{,}676\\
ca-cit-HepPh & 28{,}093 & 3{,}148{,}447 & 195{,}758{,}685\\
sx-stackoverflow & 2{,}584{,}164 & 28{,}183{,}518 & 114{,}206{,}974\\
\midrule
com-orkut & 3{,}072{,}441 & 117{,}185{,}083 & 627{,}584{,}181\\
com-friendster & 65{,}608{,}366 & 1{,}806{,}067{,}135 & 4{,}173{,}724{,}142\\
\bottomrule
\end{tabular}
\end{table}

\subsection{Self-Sizing and Accuracy}
\label{sec:exp-adaptive}

\cref{table:adaptive} gives selected operating points while \cref{fig:qsweep}
shows the full sweep. The result is that $Q$ behaves as a \emph{smooth accuracy
knob}: increasing $Q$ lowers error while increasing the self-selected
prefix, and $S/m$ matches predicted space usage (labeled ``pred.''). 
The algorithm takes only the triangle threshold $Q$: it reads until 
the prefix contains $Q$ triangles, stops at $S$ edges, and returns $Q(m/S)^3$. Since
a random $f$-prefix contains about $f^3T$ triangles, we expect
$S/m \approx (Q/T)^{1/3}$. We
choose the smallest swept $Q$ with single-digit mean error, reading only
$0.9$--$9.2\%$ of each stream. The last column is a separate $10\%$-prefix
run ($30$ random orders); for \emph{ca-cit-HepPh} and
\emph{sx-stackoverflow}, this is much larger than the threshold
$1.0\%$ and $0.9\%$ prefixes that were used by our algorithm.

\begin{table}[!htp]\centering
\caption{Self-sizing on real streams. $Q$ is the target number of prefix
triangles; all other entries are percentages. The final column is a
separate fixed-$10\%$-prefix check.}
\label{table:adaptive}
\small
\setlength{\tabcolsep}{3pt}
\begin{tabular}{|l|r|r|r|r|r|}
\toprule
Stream & $Q$ & pred. & $S/m$ & err. & err.@$10\%$ \\
\midrule
copresence & $50$ & $4.1$ & $6.2$ & $7.8$ & $2.6$ \\
reddit & $50$ & $5.0$ & $6.6$ & $4.7$ & $6.1$ \\
sx-superuser & $500$ & $6.9$ & $9.2$ & $3.8$ & $3.2$ \\
wiki-talk & $100$ & $2.3$ & $3.3$ & $4.8$ & $1.5$ \\
ca-cit-HepPh & $50$ & $0.6$ & $1.0$ & $5.5$ & $0.3$ \\
sx-stackoverflow & $50$ & $0.8$ & $0.9$ & $7.2$ & $0.7$ \\
\bottomrule
\end{tabular}
\end{table}

\begin{figure}[htbp]\centering
\includegraphics[scale=0.65]{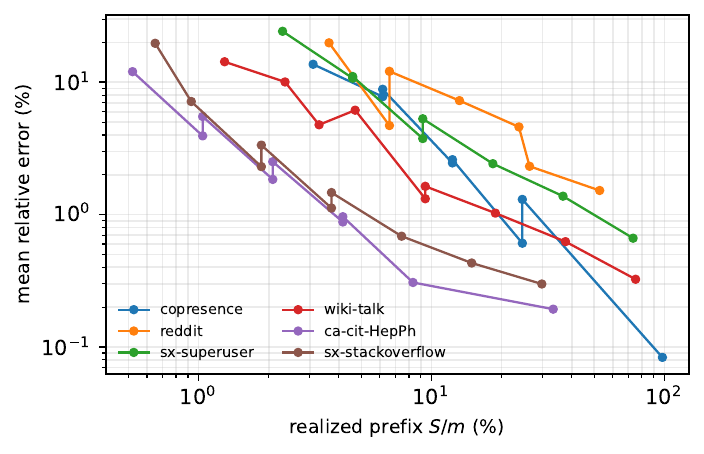}
\caption{Sweep of $Q$ values. Error decreases while the
fraction of space usage increases as $Q$ grows. The algorithm
is given $Q$ only, never $T$ or a memory budget.}
\label{fig:qsweep}
\end{figure}

\subsection{Comparison with State-of-the-Art}
\label{sec:exp-matched}

\cref{fig:triangle-stream} compares against
MV20~\cite{mcgregor2020triangle}, the state-of-the-art random-order
streaming algorithm with the best asymptotic bounds.
MV20 is given the true value of $T$; our algorithm is
not. Asymptotically MV20 uses nearly the optimal amount of space;
empirically, however, its realized footprint is governed by a
$T$-dependent floor that is independent of the accuracy parameter
$\varepsilon$: across our streams the (corrected) implementation stores
$49$--$77\%$ of all edges regardless of the requested budget ($65\%$ on
\emph{com-orkut}, $77\%$ on \emph{sx-stackoverflow}). 
Given that space,
MV20 with the true $T$ is accurate--often more accurate than the
threshold algorithm at a matched nominal $\varepsilon$. However, it stores
roughly an order of magnitude more edges than our self-selected prefix at
comparable error. Moreover, a single MV20 run on the $1.8$-billion-edge
\emph{com-friendster} did not complete within $10$ hours at $16$ cores,
whereas the threshold algorithm finishes every operating point in
seconds (\cref{sec:exp-large}).
The threshold algorithm reaches the
single-digit accuracy regime without a triangle-count estimate while
reading and storing only a small prefix.

\begin{figure}[!htb]\centering
    \includegraphics[width=\columnwidth]{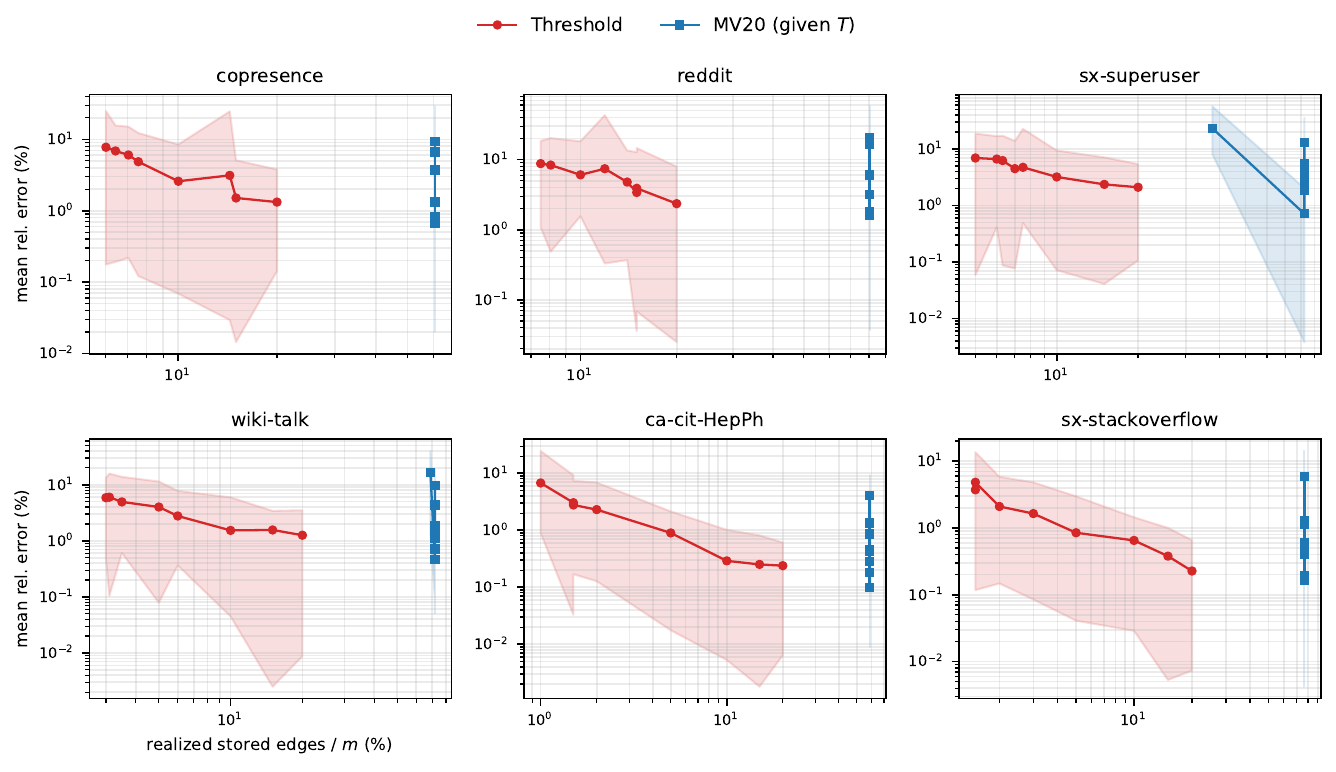}
    \caption{Accuracy vs.\ space usage for the threshold algorithm
    (red, swept over $Q$) and MV20 (blue, swept over $\varepsilon$ and
    given $T$).}
    \label{fig:triangle-stream}
\end{figure}

\cref{table:matched} merges the two fixed-space views. In each cell, the
left value is the mean relative error when the baseline receives exactly the
threshold-selected budget $M=S$ and no $T$; the right value is the smallest
budget fraction that matches the best Threshold accuracy target. Thus the
matched-space comparison is exact ($1.0\times$ the same stored-edge budget),
and Threshold is within $0.08$ absolute error of the best entry in every
column. The same-accuracy side shows the complementary tradeoff: the
variance-reduced reservoir samplers can match the target with less space on
several streams, and Threshold uses at most $12\times$ the smallest matching
space, but those methods still spend that memory after reading all $m$ edges.
Threshold spends more variance, and sometimes more prefix space, to avoid the
whole-stream pass.

\paragraph{Books stress test.}
We also keep the controlled heavy-edge experiment under the neutral name
\emph{books}: each book has one spine edge shared by many triangular
pages, and disjoint books are mixed with isolated triangles to sweep
$\rho=\eta^3/T^2$ from $10^{-7}$ to $10$. At a $5\%$ stored-edge budget,
the experiment cleanly explains the theorem's heavy-edge condition. When
$\rho\!\ll\!1$, triangles are spread out and wedge sampling is best
($0.007$ error versus Threshold's $0.031$ at $\rho=10^{-5}$). Near the
boundary the picture reverses: Threshold beats the best space-only
baseline at $\rho=0.1$ ($0.131$ versus $0.245$) and at $\rho=1$
($0.238$ versus $0.366$). Past the boundary all methods degrade. The
point is not that the condition is cosmetic; it is that the transition is
visible, and the threshold rule is competitive precisely at that
transition.
\begin{table*}
[!htb]\centering
\caption{Matched-space and same-accuracy comparison. Each cell reports
\emph{error at the threshold-selected budget $M=S$} / \emph{minimum space
needed to match the Threshold target accuracy}. The Threshold row gives its
matched-space error and the target error@space used for the second quantity.
The columns are the streams common to both sweeps. Reservoir
methods may win either half, but only after reading the whole stream; Threshold
returns from the prefix. Bold marks the best value within each half of a
column.}\label{table:matched}
\footnotesize
\setlength{\tabcolsep}{1.6pt}
\begin{tabular*}{\textwidth}{@{\extracolsep{\fill}}|l|c|c|c|c|c|}
\toprule
Algorithm & sx-superuser & wiki-talk & cit-HepPh & reddit & copresence \\
\midrule
stored edges $S$ / target & 32,781 / $0.017@18\%$ & 131,087 / \revfix{$0.018@9\%$} & 32,781 / $0.010@4\%$ & 8,203 / $0.048@13\%$ & 1,032 / $0.024@12\%$ \\
\midrule
\emph{Threshold (ours)} & $0.065\,/\,18\%$ & \revfix{$0.039\,/\,9\%$} & $0.056\,/\,4\%$ & $0.094\,/\,13\%$ & $0.095\,/\,12\%$ \\
\midrule
BuriolPODS~\cite{buriol2006datastreamtriangles} & $2.261\,/\,{>}20\%$ & \revfix{$2.824\,/\,{>}20\%$} & $0.079\,/\,{>}20\%$ & $0.868\,/\,{>}20\%$ & $0.051\,/\,{>}20\%$ \\
ColorfulTC~\cite{PaghTsourakakis2012colorful} & $0.051\,/\,20\%$ & \revfix{$0.023\,/\,10\%$} & $0.018\,/\,5\%$ & $0.046\,/\,15\%$ & $0.058\,/\,{>}20\%$ \\
Doulion~\cite{TsourakakisKangMillerFaloutsos2009} & $0.073\,/\,{>}20\%$ & \revfix{$0.028\,/\,15\%$} & $0.074\,/\,5\%$ & $0.074\,/\,20\%$ & $0.059\,/\,{>}20\%$ \\
FURL~\cite{Jung2016furl} & $0.087\,/\,20\%$ & \revfix{$0.035\,/\,10\%$} & $0.048\,/\,10\%$ & $0.071\,/\,15\%$ & $0.107\,/\,{>}20\%$ \\
GreatI~\cite{Wu2025great} & $0.030\,/\,20\%$ & \revfix{$\mathbf{0.010}\,/\,\mathbf{5\%}$} & $0.011\,/\,2\%$ & $0.040\,/\,10\%$ & $0.033\,/\,20\%$ \\
GreatII~\cite{Wu2025great} & $\mathbf{0.024}\,/\,\mathbf{10\%}$ & \revfix{$0.011\,/\,\mathbf{5\%}$} & $0.010\,/\,\mathbf{1\%}$ & $\mathbf{0.024}\,/\,10\%$ & $0.036\,/\,10\%$ \\
GreatPlus~\cite{Wu2025great} & $0.025\,/\,15\%$ & \revfix{$\mathbf{0.010}\,/\,\mathbf{5\%}$} & $0.011\,/\,2\%$ & $0.043\,/\,10\%$ & $0.056\,/\,20\%$ \\
MV20~\cite{mcgregor2020triangle} & $0.073\,/\,{>}20\%$ & \revfix{$0.028\,/\,15\%$} & $0.074\,/\,5\%$ & $0.074\,/\,20\%$ & $0.059\,/\,{>}20\%$ \\
Martingale~\cite{destefani2016triest} & $0.028\,/\,15\%$ & \revfix{$0.017\,/\,\mathbf{5\%}$} & $0.006\,/\,\mathbf{1\%}$ & $0.044\,/\,10\%$ & $\mathbf{0.015}\,/\,5\%$ \\
MascotC~\cite{LimKang2015mascot} & $0.073\,/\,{>}20\%$ & \revfix{$0.028\,/\,15\%$} & $0.074\,/\,5\%$ & $0.074\,/\,20\%$ & $0.059\,/\,{>}20\%$ \\
MascotI~\cite{LimKang2015mascot} & $0.038\,/\,{>}20\%$ & \revfix{$0.023\,/\,10\%$} & $0.016\,/\,5\%$ & $0.048\,/\,10\%$ & $0.033\,/\,15\%$ \\
NaiveScale (control) & $0.108\,/\,{>}20\%$ & \revfix{$0.032\,/\,10\%$} & $0.062\,/\,5\%$ & $0.113\,/\,15\%$ & $0.072\,/\,15\%$ \\
ThinkDAcc~\cite{ShinThinkD2018} & $0.028\,/\,15\%$ & \revfix{$0.017\,/\,\mathbf{5\%}$} & $0.006\,/\,\mathbf{1\%}$ & $0.044\,/\,10\%$ & $\mathbf{0.015}\,/\,5\%$ \\
ThinkDFast~\cite{ShinThinkD2018} & $0.028\,/\,15\%$ & \revfix{$0.017\,/\,\mathbf{5\%}$} & $\mathbf{0.005}\,/\,\mathbf{1\%}$ & $0.044\,/\,10\%$ & $\mathbf{0.015}\,/\,5\%$ \\
TriangleSparsifier~\cite{TsourakakisKolountzakisMiller2011} & $0.073\,/\,{>}20\%$ & \revfix{$0.028\,/\,15\%$} & $0.074\,/\,5\%$ & $0.074\,/\,20\%$ & $0.059\,/\,{>}20\%$ \\
TriestBase~\cite{destefani2016triest} & $0.091\,/\,{>}20\%$ & \revfix{$0.052\,/\,15\%$} & $0.068\,/\,5\%$ & $0.083\,/\,20\%$ & $0.051\,/\,15\%$ \\
TriestImpr~\cite{destefani2016triest} & $0.028\,/\,15\%$ & \revfix{$0.017\,/\,\mathbf{5\%}$} & $0.006\,/\,\mathbf{1\%}$ & $0.044\,/\,10\%$ & $\mathbf{0.015}\,/\,5\%$ \\
WRS~\cite{ShinWRS2017} & $0.030\,/\,15\%$ & \revfix{$0.018\,/\,10\%$} & $0.009\,/\,\mathbf{1\%}$ & $0.045\,/\,\mathbf{5\%}$ & $\mathbf{0.015}\,/\,5\%$ \\
WedgeSampling~\cite{seshadhri2013wedge} & $0.053\,/\,15\%$ & \revfix{$0.066\,/\,{>}20\%$} & $0.007\,/\,\mathbf{1\%}$ & $0.031\,/\,\mathbf{5\%}$ & $\mathbf{0.015}\,/\,\mathbf{1\%}$ \\
\bottomrule
\end{tabular*}
\end{table*}

\subsection{Early Stopping: The Previously Missing Domain}
\label{sec:exp-early}

We now investigate the effect of our key novelty on accuracy and space usage.
\cref{table:earlystop} shows that the threshold algorithm
returns an estimate after seeing only $0.46\%$--{$6.6\%$} of the stream.
Reservoir samplers, budget samplers, and related streaming estimators~\cite{destefani2016triest,LimKang2015mascot,Jung2016furl,ShinWRS2017,ShinThinkD2018,Wu2025great,TsourakakisKangMillerFaloutsos2009,McGregorVorotnikovaVu2016,TsourakakisKolountzakisMiller2011,seshadhri2013wedge,PaghTsourakakis2012colorful,buriol2006datastreamtriangles,jha2013birthday,pavan2013vldb}
are also single-pass algorithms, but they must see all $m$ edges before returning the estimate. This is the resource that matters when the input itself is the
bottleneck: on \emph{com-friendster}, the threshold algorithm reads
$215\times$ fewer edges before returning an estimate.

\begin{table}[!ht]\centering
\caption{Early stopping. The threshold algorithm halts after $S$ edges; any reservoir/sampling baseline must see all $m$ edges. On com-friendster this is a $215\times$ stream-read reduction.}\label{table:earlystop}
\small
\setlength{\tabcolsep}{2.5pt}
\begin{tabular}{|l|r|r|r|r|}
\toprule
Stream & $m$ & reads $S$ & $S/m$ & $m/S$ \\
\midrule
copresence & 16,725 & 1,032 & 6.2\% & $16\times$ \\
reddit-hyperlink & 124,330 & 8,203 & 6.6\% & $15\times$ \\
sx-superuser & 714,570 & 32,781 & 4.6\% & $22\times$ \\
wiki-talk & \revfix{2,787,967} & 131,087 & \revfix{4.7\%} & \revfix{$21\times$} \\
cit-HepPh & 3,148,447 & 32,781 & 1.0\% & $96\times$ \\
sx-stackoverflow & 28,183,518 & 367,018 & 1.3\% & $77\times$ \\
com-orkut & 117,185,083 & 1,048,594 & 0.9\% & $112\times$ \\
com-friendster & 1,806,067,135 & 8,388,629 & 0.5\% & $215\times$ \\
\midrule
\emph{any reservoir/sampling method} & $m$ & $m$ & $100\%$ & $1\times$ \\
\bottomrule
\end{tabular}

\end{table}

We next deny the reservoir samplers their whole-stream pass. In
\cref{table:stopearly}, every method is forced to stop at the same
$S$-edge prefix selected by the threshold rule and is given no $T$. The
off-the-shelf estimators report only the graph they have seen, so their
whole-graph estimate undershoots to error $\approx1$. The $+\,(m/S)^3$
column is deliberately oracle-aided: a random $f=S/m$ prefix has about
$f^3T$ triangles, so it scales the prefix count by $(m/S)^3$. Knowing that
the selected prefix is meaningful would require the triangle-count scale,
essentially the precise value of $T$, so this gives the reservoir samplers
far more power than their native interface. Useful early stopping is not
just storing a prefix; it is knowing how to extrapolate it without a
triangle-count estimate.

\begin{table}[!htb]\centering
\caption{Forced early stop at our self-selected prefix $S$, with budget $M=S$ and no $T$. Native samplers report the prefix graph and undershoot to error $\approx1$; adding our cubic extrapolation (which requires precise knowledge of $T$) makes them comparable. The threshold algorithm provides the whole-graph estimate. Best of extrapolated sampler and Threshold is \textbf{bold}.}
\label{table:stopearly}
\small
\setlength{\tabcolsep}{2pt}
\begin{tabular}{|l|r|r|r|r|}
\toprule
Stream & $S/m$ & samplers @ $S$ & $+\,(m/S)^3$ & \emph{Threshold (ours)} \\
\midrule
copresence & $6.2\%$ & $1.00$ & $\mathbf{0.083}$ & $0.095$ \\
reddit & $6.6\%$ & $1.00$ & $\mathbf{0.087}$ & $0.094$ \\
sx-superuser & $4.6\%$ & $1.00$ & $0.075$ & $\mathbf{0.065}$ \\
wiki-talk & \revfix{$4.7\%$} & $1.00$ & \revfix{$0.040$} & \revfix{$\mathbf{0.039}$} \\
cit-HepPh & $1.0\%$ & $1.00$ & $0.068$ & $\mathbf{0.056}$ \\
stackoverflow & $1.3\%$ & $1.00$ & $\mathbf{0.037}$ & $0.075$ \\
com-orkut & $0.9\%$ & $1.00$ & $0.055$ & $\mathbf{0.024}$ \\
com-friendster & $0.46\%$ & $1.00$ & $0.043$ & $\mathbf{0.038}$ \\
\bottomrule
\end{tabular}
\end{table}

\begin{figure}[!htb]\centering
\includegraphics[width=\columnwidth]{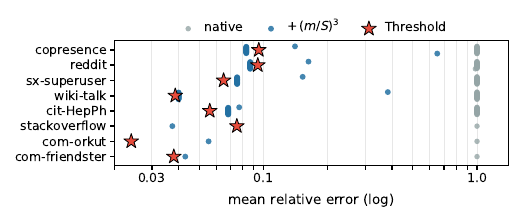}
\caption{Early stopping at the threshold prefix $S$. Reservoir
samplers report the prefix graph and undershoot to error $\approx1$;
cubic extrapolation makes them comparable to the threshold
algorithm, although the extrapolation still underperforms much of the time.
The threshold algorithm is the method that provides this
whole-graph estimate natively, with no $T$ and no prescribed budget.}
\label{fig:earlystop}
\end{figure}

Finally, \cref{table:streamread} removes memory as the bottleneck
and asks how much of the stream must be read before a whole-graph
estimate is within $10\%$ or $5\%$ error. A native reservoir sampler,
even with unlimited memory, must read $97$--$98\%$ of the stream because
a random $f$-prefix contains only about $f^3T$ triangles. The threshold
algorithm reaches comparable accuracy after a $1$--{$7\%$} prefix because
it stops when it finds enough triangles and consequently extrapolates that number in its prefix to the entire graph.

\begin{table}[!htb]\centering
\caption{Stream read before a whole-graph estimate is within $\varepsilon$, with unlimited memory. Native reservoir samplers must read almost all edges because an $f$-prefix contains about $f^3T$ triangles; cubic extrapolation helps only after being handed the stopping fraction.}
\label{table:streamread}
\small
\setlength{\tabcolsep}{2pt}
\begin{tabular}{|l|c|cc|cc|}
\toprule
 & Threshold & \multicolumn{2}{c|}{Reservoir (native)} & \multicolumn{2}{c|}{Reservoir $+(m/S)^3$} \\
Stream & reads (err) & $\le\!10\%$ & $\le\!5\%$ & $\le\!10\%$ & $\le\!5\%$ \\
\midrule
sx-superuser & 5\%~($0.065$) & 97\% & 98\% & 4\% & 6\% \\
wiki-talk & \revfix{5\%~($0.039$)} & 97\% & 98\% & \revfix{2\%} & \revfix{4\%} \\
cit-HepPh & 1\%~($0.056$) & 97\% & 98\% & 1\% & 1\% \\
reddit & 7\%~($0.094$) & 97\% & 98\% & 5\% & 11\% \\
copresence & 6\%~($0.095$) & 97\% & 98\% & 2\% & 7\% \\
\bottomrule
\end{tabular}
\end{table}

\subsection{Runtime Scalability}
\label{sec:exp-large}
\begin{figure}[!hb]\centering
\vspace{-0.5\baselineskip}
\includegraphics[scale=.6]{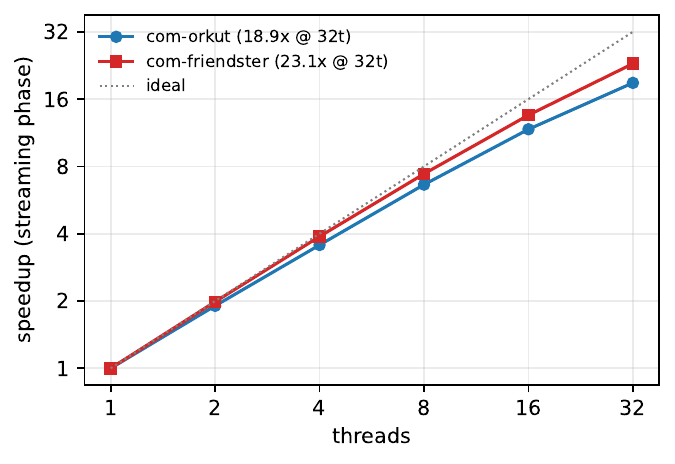}
\Description{Line plot of speedup versus thread count for com-orkut and
com-friendster, reaching 18.9x and 23.1x at 32 threads respectively.}
\caption{Strong scaling (streaming phase, fixed stream order):
$18.9\times$ on \emph{com-orkut} and $23.1\times$ on the
$1.8$-billion-edge \emph{com-friendster} at $32$ threads.}
\label{fig:scaling}
\vspace{-0.5\baselineskip}
\end{figure}
The measured streaming phase is consistent with the prefix cost. With
$8$ threads at a $5\%$ prefix, the median streaming phase (prefix
ingestion plus all checkpoint counts, excluding input parse and
shuffle) takes $0.16$ seconds on \emph{sx-stackoverflow}, $0.53$
seconds on \emph{com-orkut}, and $10.8$ seconds on the
$1.8$-billion-edge \emph{com-friendster} --- an effective rate of
$1.7\times10^{8}$ stream edges per second on the largest graph. The
(corrected) MV20 implementation, which must read the whole stream and
build its sketch, takes $200$ seconds per trial on \emph{com-orkut} at
the same thread count. Early stopping helps twice: the
algorithm skips the remaining $m-S$ edges, and after materializing the
prefix it no longer has to maintain an online triangle estimator through
sequential edge updates. The expensive work is instead triangle
enumeration on a static prefix, so GBBS~\cite{dhulipala2018theoretically}
parallelizes the edge-local
neighbor-intersection counts and reduces the partial counts. The
stream-order scan and the search over prefix lengths remain sequential
across each graph; the dominant per-graph triangle count is what scales.
\cref{fig:scaling} shows this parallel side:
with the stream order fixed (so the work is identical at every thread
count), the GBBS implementation reaches an $18.9\times$ speedup at $32$
threads on \emph{com-orkut} and $23.1\times$ at $32$ threads on
\emph{com-friendster} ($72\%$ parallel efficiency); the larger graph
scales better because its prefix exposes more parallel
triangle-enumeration work per sequential byte.


\section{Conclusion}\label{sec:conclusion}

The threshold algorithm reframes streaming triangle counting as a
stopping problem rather than a budgeting problem: read until $Q$
triangles appear in the prefix, then return $\widehat T=Q(m/S)^3$.
Under $\eta\le T^{2/3}$, this gives a $(1\pm\varepsilon)$
approximation using $\widetilde O(m/T^{1/3})$ space, without knowing
$T$ or a memory budget in advance. Stopping is an algorithmic resource. While reservoir samplers remain strong when accuracy is measured after a full pass, they are inefficient when stream access is a scarce resource. The threshold algorithm self-selects the prefix, returns a whole-graph estimate after
seeing only $0.46\%$--\revfix{$6.6\%$} of the stream, and on a
$1.8\times10^9$-edge graph reads $0.46\%$ of the edges with $3.8\%$
error. A small random prefix can be enough, especially when you
know when to stop.

\clearpage

\balance
\makeatletter
\global\@ACM@balancefalse
\makeatother

\bibliographystyle{ACM-Reference-Format}
\bibliography{triangles}

\end{document}